\begin{document}

\title{Fuchsian Codes for AWGN Channels}

\author{Iv\'an Blanco-Chac\'{o}n \and Dionis Rem\'{o}n \and  Camilla Hollanti \and Montserrat Alsina}


\institute{I. Blanco-Chac\'{o}n \at
              Aalto University, Department of Mathematics and Systems Analysis, P.O. Box 11100, FI-00076 AALTO, Helsinki, Finland.\\
              Partially supported by the Academy of Finland grant 131745. \\
              \email{ivan.blancochacon@aalto.fi}
           \and
           C. Hollanti \at
              Aalto University, Department of Mathematics and Systems Analysis, P.O. Box 11100, FI-00076 AALTO, Helsinki, Finland.\\
              Partially supported by the Academy of Finland grant 131745. \\
              \email{camilla.hollanti@aalto.fi}
           \and
           D. Rem\'{o}n \at
              University of Barcelona, Faculty of Mathematics. Gran Via de les Corts Catalanes 585, 08007 Barcelona (Spain). \\
              Partially supported by MTM2012-33830. \\
              \email{dremon@ub.edu}
           \and
           M. Alsina \at
              Universitat Polit\`{e}cnica de Catalunya- BarcelonaTech, Dept. Applied Mathematics III - EPSEM,
              Av. Bases de Manresa 61-73,  08242  Manresa (Spain)\\
              Partially supported by MTM2009-07024. \\
              \email{montserrat.alsina@upc.edu}
}

\date{Received: date / Accepted: date}

\maketitle

\begin{abstract}
We develop a new transmission scheme for additive white Gaussian noisy (AWGN) single-input single-output (SISO) channels  without fading based on arithmetic Fuchsian groups. The properly discontinuous character of the action of these groups on the upper half-plane translates into fast decodability.
\keywords{Arithmetic Fuchsian groups\and AWGN \and Coding gain\and Fast decoding \and Lattice codes \and Quaternion algebras\and SISO Channels}
\subclass{94B60 \and 94B35  \and 11F06 \and 20H10}
\end{abstract}

\section*{Introduction}
\label{intro}
In the last ten years, a group of people have used Number Theory as a tool to construct powerful transmission schemes for fast fading multiple-input multiple-output (MIMO) channels. We refer to the use of the structure of orders in number fields and in central simple division algebras over number fields (cf. \cite{viterbo2} and \cite{viterbo1}).

In the setting of space-time block codes, in \cite{hollanti3} it is shown how to use class field theory to derive an upper bound for the coding gain of a space-time block code attached to a cyclic division algebra, and explicit constructions as well as simulations are also carried out. Space-time block codes obtained from cyclic division algebras via the left regular representation over a suitable center have non vanishing minimal determinants (NVD property), and this translates into better coding gains. Maximal orders in the context of space-time coding were first considered in \cite{hollanti4}.

In the present paper we also use some ideas from Number Theory to construct a transmission scheme. We assume one transmit (Tx) antenna, one receive (Rx) antenna, and an alphabet consisting of a finite collection of $4$-tuples of integers $\{(x_i,y_i,z_i,t_i)\}_{i=1}^{|C|}$, where $C$ is the codebook and $|C|< \infty$ its size. We will encode each $4$-tuple into a codeword to be transmitted by a single transmit antenna, and then reconstruct (decode) the sent $4$-tuple. Intuitively, we are sending $4$ integers simultaneously by a single antenna. However, our $4$-tuple will consist of 3 independent integers, and the fourth will be determined according to these as an additional check-up symbol.

To this end, we will suppose that each $4$-tuple determines a matrix belonging to an arithmetic Fuchsian group of the first kind attached to the maximal order (up to conjugation) of an indefinite quaternion $\mathbb{Q}$-algebra which has been fixed at the beginning. This condition translates into the fact that each $4$-tuple $(x,y,z,t)$ of our alphabet satisfies an equation of the form $x^2-ay^2-cz^2+abt^2=1$ for fixed $a,b\in\mathbb{Z}$. Geometrically, this means that our alphabet is contained in a $4$-dimensional hyper quadric.

Arithmetic Fuchsian groups of the first kind arise in the study of Shimura curves (cf. \cite{shimura1967}), a rich theory with a large number of theoretical applications in very deep branches of Number Theory, like in Jacquet-Langlands correspondence, the theory of canonical models or the proof of the Shimura-Taniyama-Weil conjecture (the main argument in Fermat's Last Theorem). But Shimura curves are also present in the theory of error-correcting codes (cf. \cite{elkies}). In addition, we mention that arithmetic Fuchsian groups have been used in the design of space-time block codes (cf. \cite{brasil1}). Our approach is of completely different nature, since our code is non-linear and has logarithmic complexity. Besides that, it uses in a crucial way a recent result: the point-reduction algorithm \ref{algoritmo1} and its generalization given in theorem \ref{redpointalg}. In particular, our transmission scheme can be thought of as a non-linear information-compressing scheme. We also remark that the usual reason as to why linear codes are preferred over non-linear ones, namely their simple linear decoding methods, do not apply here, since we shall also demonstrate how to decode the constructed non-linear code with lesser complexity than that of typical linear maximum-likelihood (ML) decoders, as we will soon explain.

Fixed an indefinite quaternion $\mathbb{Q}$-algebra, one can embed the group of units of reduced norm 1 in its maximal order into $\mathrm{M}_2(K)$ via the left-regular representation (for some totally real number field $K$), obtaining in this way a discrete group $\Gamma$ (cf. \cite{alsinabayer}).

The crucial properties which yield a remarkably efficient decoding are that the action of $\Gamma$ on the complex upper half plane (which throughout this paper will be denoted by $\mathcal{H}$) is properly discontinuous, and that there exists a recently produced  explicit algorithm by Rem\'on et al. (\cite{bayerremon}) to reduce points in $\mathcal{H}$ to any given fundamental domain, once a presentation of the group has been obtained. In fact, our algorithm does $O(\log (n))$ operations, where $n$ is the number of generating matrices of a given matrix input. We present in this paper a very particular version of this algorithm for a concrete arithmetic Fuchsian group, leaving the general case for forthcoming publications, since it would significantly increase the length of the present paper.

The paper is organized as follows: in Section 1 we describe in detail our transmission problem and give some coding theoretical motivations for our work. In Section 2, we review some terminology and facts about orders in quaternion algebras and arithmetic Fuchsian groups which will be used later. In Section 3 we develop the transmission scheme and explain the algorithm on which it is based. We compute the complexity of this algorithm. In Section 4 we apply the reduction point algorithm to the AWGN channel, and study how to generate the constellation. In Section 5 we plot the  error performance of our Fuchsian codes with constellation sizes $4, 8$ and $16$ by computing the codeword error rate (CER) as a function  of  signal-to-noise ratio (SNR). Finally, in Section 6 we discuss some topics for further research.

\section{Generalities}
\label{sec:1}
We are interested in sending $4$-tuples of integers $(x,y,z,t)$ subject to the restriction $x^2-ay^2-bz^2+abt^2=1$, where $a>0$ and $b<0$ and each component can take a finite prescribed number of values. We will send this $4$-tuple as the complex signal $\gamma(\tau)$ where $\gamma=\left(\begin{array}{cc}x+\sqrt{a}y & z+\sqrt{a}t\\b(z-\sqrt{a}t) & x-\sqrt{a}y\end{array}\right)$ and $\tau$ is an element in the complex upper half-plane $\mathcal{H}$ to be determined for optimality. We are interested in the case when the matrices belong to a finite subset $G$ of an arithmetic Fuchsian group $\Gamma$. In this case, we will refer to the set $\{\gamma(\tau)|\gamma\in G\}$ as a Fuchsian code. We will take $\tau$ such that $\gamma_1(\tau)\neq\gamma_2(\tau)$ if $\gamma_1\neq\pm\gamma_2$. To obtain this, it is enough that $\tau$ is an interior point of a fundamental domain for $\Gamma$ acting on $\mathcal{H}$. Hence, in this case, we have as many codewords as $4$-tuples. We will suppose that our channel is affected by additive white Gaussian noise (AWGN), having a channel equation
$$
v=u+n,
$$
where $u=\gamma(\tau)$ is the transmitted signal, $v$ the received signal and $n\in\mathbb{C}$ is a complex circular Gaussian random variable.

\begin{definition}Let $C$ be a code. The \emph{data rate} in bits per channel use (bpcu) of $C$ is defined as
$
R=\dfrac{\log_2(|C|)}{N},
$
where $N$ is the number of channel uses. Notice that in our case $N=1$.

The \emph{code rate} of $C$ in (real) dimensions per channel use (dpcu) is defined as
$
R_c=\dfrac{\dim_\mathbb{Q}(C)}N,
$
where $\dim_\mathbb{Q}(C)$ is the number of independent integer symbols in the codeword and $N$ is again the number of channel uses. Notice that in our case the code rate is 3 dpcu.

\end{definition}

We will refer to the set of codewords as non uniform Fuchsian (NUF) constellation. It is important to point out that, due to the algebraic dependence of the $4$ integers in each of our $4$-tuples, we are essentially sending three independent integers at a time by a unique antenna. We will carry out simulations for $4$-NUF ($2$ bpcu), $8$-NUF ($3$ bpcu) and $16$-NUF ($4$ bpcu) constellations.

In this paper by signal-to-noise ratio we will mean the quotient $SNR=10\log_{10}\left(\dfrac{E}{N_0}\right)$ (dB), where $E$ is the average energy, \emph{i.e.}, $E=\frac{1}{|C|}\sum_{k=1}^{|C|}||w_k||_2^2$,  $\{w_k\}_{k=1}^{|C|}\subset\mathbb{C}$ being the  set of the codewords, $||\cdot||_2$ the Euclidean norm, and $N_0$ the noise variance.
\section{Arithmetic Fuchsian groups acting on $\mathcal{H}$}

\subsection{Quaternion algebras, orders and arithmetic Fuchsian groups} In this section, we survey some facts on the arithmetic of quaternion algebras. We mainly follow the references \cite{alsinabayer} and \cite{vigneras}.

Let $a,b\in\mathbb{Z}\setminus\{0\}$ and let
$H=\left(\frac{a,b}{\mathbb{Q}}\right)$
be the quaternion $\mathbb{Q}$-algebra generated by $I$ and $J$
with the standard relations
$I^2=a,J^2=b,IJ=-JI$.
Denote $K=IJ$. For $\omega= x+yI+zJ+tK\in H$, the conjugate is $\overline{\omega}=x-yI-zJ-tK$, and the reduced trace and the reduced norm are defined by
$$
\mathrm{Tr}(\omega)=\omega+\overline{\omega}=2x, \quad
\mathrm{N}(\omega)=\omega\overline{\omega}=x^2-ay^2-bz^2+abt.
$$
The following map is a monomorphism of $\mathbb{Q}$-algebras
$$
\begin{array}{ccc}
\phi: \left(\dfrac{a,\, b}{\mathbb{Q}}\right) & \to & \mathrm{M}_2\left(\mathbb{Q}(\sqrt{a})\right)\\
x+yI+zJ+tK & \mapsto &
\left(\begin{array}{ccc} x+y\sqrt{a} &\phantom{x} & z+t\sqrt{a}\\
b(z-t\sqrt{a})&\phantom{x} & x-y\sqrt{a}\end{array}\right).
\end{array}
$$
Notice that for any $\omega\in H$,
$\mathrm{N}(\omega)=\mathrm{det}\left(\phi(\omega)\right)$, and $\mathrm{Tr}(\omega)=\mathrm{Tr}\left(\phi(\omega)\right)$. In the rest of the paper, $H$ will denote a quaternion $\mathbb{Q}$-algebra.

For any absolute value $|\phantom{x}|_p$ of $\mathbb{Q}$ attached to a prime number $p$ or to $p=\infty$,
$H_p:=H\otimes_{\mathbb{Q}}\mathbb{Q}_p$ is a quaternion $\mathbb{Q}_p$-algebra.
If $H_p$ is a division algebra, it is said that $H$ is ramified at $p$.
As is well known, $H$ is ramified at a finite even number of places.
The discriminant $D_H$ is defined as the product of the primes at which $H$ ramifies.
Moreover, two quaternion $\mathbb{Q}$-algebras are isomorphic if and only if they have the same discriminant.

\begin{definition}
If $D_H=1$, $H$ is said to be non-ramified; in this case, it is isomorphic to $M_2(\mathbb{Q})$.
If $H$ is ramified at $p=\infty$, it is said to be definite, and indefinite otherwise.
An indefinite quaternion algebra is said to be small ramified if $D_H$ is equal to the product of two distinct primes.
\end{definition}

An element $\alpha\in H$ is said to be integral if $\mathrm{N}(\alpha),\mathrm{Tr}(\alpha)\in \mathbb{Z}$.
A $\mathbb{Z}$-lattice $\Lambda$ of $H$ is a finitely generated torsion free $\mathbb{Z}$-module contained in $H$.
A $\mathbb{Z}$-ideal of $H$ is a $\mathbb{Z}$-lattice $\Lambda$ such that $\mathbb{Q}\otimes \Lambda\simeq H$.
A $\mathbb{Z}$-ideal is not in general a ring.
An order $\mathcal{O}$ of $H$ is a $\mathbb{Z}$-ideal which is a ring.
Each order of a quaternion algebra is contained in a maximal order.
In an indefinite quaternion algebra, all the maximal orders are conjugate (cf.\,\cite{vigneras}).

\begin{definition}Two groups $G_1$ and $G_2$ are said to be commensurable if $G_1\cap G_2$ has finite index both in $G_1$ and in $G_2$.
\end{definition}

\begin{definition}
Given $D> 1$, fix a quaternion algebra $H$ of discriminant $D$. Let us denote by $\Gamma(D,1)$ the image under $\phi$ of the group of units of reduced norm $1$ in a maximal order of $H$. The groups $\Gamma(D,1)$ are Fuchsian groups of the first kind. A discrete group $\Gamma\subseteq \mathrm{GL}\left(2,\mathbb{R}\right)$ is said to be an arithmetic Fuchsian group of the first kind (Fuchsian group from now on) if there exists some quaternion $\mathbb{Q}$-algebra of discriminant $D$ such that $\Gamma$ is commensurable with $\Gamma(D,1)$.
\label{comentmodulareichler}
\end{definition}

\begin{remark}The notation $\Gamma(D,1)$ is a particular case of $\Gamma(D,N)$, which stands for the group of units of reduced norm 1 in a special kind of orders, so called Eichler orders, which are intersections of two maximal orders (cf. \cite{alsinabayer}).
\end{remark}

\subsection{Fundamental domains}

The group $\mathrm{SL}(2, \mathbb{R})$ acts on $\mathcal{H}$ by M\"{o}bius transformations
and its action factorizes through
$\mathrm{PSL}(2, \mathbb{R})$.

\begin{definition}Let $\Gamma$ be an arithmetic Fuchsian group commensurable with $\Gamma(D,1)$. A fundamental domain for the action of $\Gamma$ on $\mathcal{H}$ is a region $\mathcal{F}$ of $\mathcal{H}$ satisfying:
\begin{itemize}
\item[a)] For any $z,w\in\mathcal{F}$, if there exists $\gamma\in\Gamma$ such that $\gamma(z)=w$, then $z=w$ and $\gamma=Id$.
\item[b)] For any $z\in\mathcal{H}$, there exists $w\in\mathcal{F}$ and $\gamma\in\Gamma$ such that $\gamma(z)=w$.
\end{itemize}
\end{definition}
Fundamental domains for several groups $\Gamma(D,1)$ can be found in \cite{alsinabayer}.

\section{The point reduction algorithm: the case of an absolutely reliable channel}

Let us suppose that we have an alphabet consisting of a finite set of $4$-tuples of integers, say $A=\{(x_i,y_i,z_i,t_i)\}_{i=1}^N\subset\mathbb{Z}^4$. Let us also suppose that the elements $(x,y,z,t)\in A$ satisfy $x^2-ay^2-bz^2+abt^2=1$, the normic equation for $H$. We can think of these 4-tuples as elements of a hyper quadric in $\mathbb{R}^4$.

The way of sending $(x,y,z,t)\in A$ will be by sending $\gamma(\tau)$ where  $\gamma=$\\$\left(\begin{array}{cc}x+y\sqrt{a} & z+t\sqrt{a}\\b(z-t\sqrt{a}) & x-y\sqrt{a}\end{array}\right)$ and $\tau$ is an interior point of a prescribed fundamental domain $\mathcal{F}$ and with maximal distance to the boundary of $\mathcal{F}$. Notice that we are compressing $4$ information symbols, namely, $x,y,z$ and $t$ into $\gamma(\tau)\in \mathbb{C}$, and transmitting it by a single antenna. The usual way of sending 4 integers would be to use \emph{e.g.} spatial multiplexing and transmit 2 independent QAM symbols from two transmit antennas, or employ the channel two times sending one QAM symbol each time. Our method avoids involving more than one transmit antenna and employs the channel only once. In case the channel is very noisy at one channel use, there is still of course the option to use further coding over multiple antennas or over multiple channel uses by \emph{e.g.} a simple repetition.  But let us assume in this section that our channel is absolutely reliable. The problem is how to decode the received symbol $\gamma(\tau)$ to recover $(x,y,z,t)$, which is equivalent to obtain $\gamma$ out from $\gamma(\tau)$.

Since $\tau$ is an interior point, and interior points have trivial stabilizers, it suffices to find an algorithm which, given $z\in\mathcal{H}$, returns a representative $w\in\mathcal{F}$ of $z$. This problem is known as point reduction. Since $\pm\gamma$ induce the same map on $\mathcal{H}$, we would recover $\pm\gamma$. Hence, we will assume that we know how to decide if $\gamma$ or $-\gamma$ has been sent. We will prove that this is the case.

\subsection{Arithmetic Fuchsian groups of signature $(1;e)$}
We will develop our algorithm for the so called arithmetic Fuchsian groups $\Gamma$ of signature $(1;e)$ (for a precise definition cf. Takeuchi \cite{tak}). These groups $\Gamma$ are not the image under $\phi$ of a group $\Gamma(D,1)$, but indeed, the group $\Gamma^2$ generated by products of two elements of $\Gamma$ is. The index $[\Gamma:\Gamma^2]$ is explicitly computed according to a formula by Shimizu (cf. \cite{tak}). This being said, Takeuchi groups cannot be (at first glance) used to encode NUF symbols. The reason why we develop our algorithm for them, is that the general algorithm is more complicated. Nevertheless, we have implemented the algorithm also in the general case and we will plot our SNR/CER graphs with it. The authors are more than satisfied to provide an implementation of our algorithm to any interested reader.

For $g=\begin{pmatrix}a & b\\c & d\end{pmatrix}\in\Gamma$, if $g$ is not a homothety then denote by $I(g)$ the isometry circle of $g$, namely, the set $\{z\in\mathcal{H}\mid\quad\mid cz+d\mid =1\}$.
If $g$ is a homothety of factor $\lambda$,
then define $I(g)=\{z\in\mathcal{H}\mid\quad\mid\lambda z\mid =1\}$.
Denote by $\mathrm{Ext}(I(g))$ the exterior of $I(g)$ and by $\mathrm{Int}(I(g))$ the complement of $\mathrm{Ext}(I(g))$.
For any $\lambda\in\mathbb{R}$, $\lambda>0$, denote
\[
S(\lambda)=\{z\in\mathcal{H}\mid\,\,\lambda^{-1}\leq\mid z\mid\leq\lambda\}.
\]
If $h$ is a homothety of factor $\lambda$, then notice that the isometry circles of $h$ and $h^{-1}$ are parallel in the hyperbolic metric. Sometimes (cf.\,\cite{alsinabayer}), it is possible to find a system of generators $G$ of $\Gamma$ such that one of them is an hyperbolic homothety $h$ of factor $\lambda$ and a fundamental domain of the form
\[
\mathcal{F}=\bigcap_{g\in G\setminus\{h,h^{-1}\}}\mathrm{Ext}(I(g))\cap S(\lambda).
\]
In this case, we shall call $S(\lambda)$ the \emph{fundamental strip} of $\mathcal{F}$. We can construct such a fundamental domain, for instance, when $\Gamma$ is one of the 73 arithmetic Fuchsian groups of signature $(1;e)$, which were classified by Takeuchi in \cite{tak}.
These arithmetic Fuchsian groups admit a presentation of the form $\Gamma=\langle \alpha,\beta:\left(\alpha\beta\alpha^{-1}\beta^{-1}\right)^e=\pm 1\rangle$.

\begin{proposition}[Sijsling, \cite{sij}]
Let $\Gamma$ be an arithmetic Fuchsian group of signature $(1;e)$ generated by $\alpha$ and $\beta$. Then, after a change of variables, we can suppose that $\alpha$ is a homothety of factor $\lambda$ and $\beta=\begin{pmatrix}a & b\\b & a\end{pmatrix}$. Furthermore, the hyperbolic rectangle $\mathcal{F}=S(\lambda)\cap\mathrm{Ext}(I(\beta))\cap\mathrm{Ext}(I(\beta^{-1}))$ is a fundamental domain for $\Gamma$.
\label{generatorsTak}
\end{proposition}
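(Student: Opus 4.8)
The plan is to produce the asserted normal form by exploiting the rigidity of arithmetic Fuchsian groups of signature $(1;e)$, and then to identify the resulting fundamental domain via the Ford/isometry-circle construction already set up in the text. First I would recall that such a $\Gamma$ has a presentation $\langle\alpha,\beta:(\alpha\beta\alpha^{-1}\beta^{-1})^e=\pm1\rangle$, so the commutator $c=\alpha\beta\alpha^{-1}\beta^{-1}$ is an elliptic element of order $2e$ (or $e$ in $\mathrm{PSL}_2$), fixing a unique point $z_0\in\mathcal{H}$. The group $\Gamma$ is generated by two hyperbolic elements $\alpha,\beta$ whose axes must be arranged symmetrically about $z_0$; this is the torus-with-one-cone-point picture. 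Conjugating by a suitable element of $\mathrm{PSL}(2,\mathbb{R})$ I can move the axis of $\alpha$ to the imaginary axis, which makes $\alpha$ a homothety $z\mapsto\lambda^2 z$ of factor $\lambda$. The remaining freedom is conjugation by the diagonal torus (scalings of $\mathcal{H}$) together with $z\mapsto -1/z$; I would use the $z\mapsto -1/z$ symmetry, or equivalently the relation forced by the cone point, to arrange that $\beta$ is symmetric under $z\mapsto-1/z$, which on the level of matrices means exactly that $\beta=\begin{pmatrix}a&b\\b&a\end{pmatrix}$ (a matrix commuting with $\begin{pmatrix}0&-1\\1&0\end{pmatrix}$ up to sign has equal diagonal and equal off-diagonal entries). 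This is the content I would attribute to Sijsling; I would present the reduction carefully enough to see where each of the continuous and discrete degrees of freedom in $\mathrm{PGL}(2,\mathbb{R})$ is spent.

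Next I would verify the fundamental-domain claim. With $\alpha$ a homothety of factor $\lambda$, its isometry circles $I(\alpha)$ and $I(\alpha^{-1})$ are $|z|=\lambda^{\mp1}$, so $S(\lambda)=\mathrm{Ext}(I(\alpha))\cap\mathrm{Ext}(I(\alpha^{-1}))$ in the convention of the text, and $\alpha$ identifies the two bounding geodesics $|z|=\lambda^{-1}$ and $|z|=\lambda$. Since $\beta=\begin{pmatrix}a&b\\b&a\end{pmatrix}$ has lower-left entry $b$ and lower-right entry $a$, the circle $I(\beta)=\{|bz+a|=1\}$ and $I(\beta^{-1})=\{|bz-a|=1\}$ (using $\beta^{-1}=\begin{pmatrix}a&-b\\-b&a\end{pmatrix}$, as $\det\beta=a^2-b^2=1$) are reflections of one another across the imaginary axis, and $\beta$ carries $\mathrm{Ext}(I(\beta^{-1}))$-side geodesic onto the $\mathrm{Int}(I(\beta))$ boundary. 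Thus the four sides of $\mathcal{F}=S(\lambda)\cap\mathrm{Ext}(I(\beta))\cap\mathrm{Ext}(I(\beta^{-1}))$ are paired by $\alpha,\alpha^{-1},\beta,\beta^{-1}$, the region is a hyperbolic quadrilateral (``rectangle''), and one checks the vertex cycle: the four vertices form a single cycle whose total angle is $2\pi/e$, matching the cone point of order $e$. I would then invoke Poincaré's polygon theorem: the side-pairing generates a discrete group with $\mathcal{F}$ as fundamental domain and with presentation exactly $\langle\alpha,\beta:(\alpha\beta\alpha^{-1}\beta^{-1})^e=1\rangle$, hence this group is $\Gamma$ and $\mathcal{F}$ is a fundamental domain for $\Gamma$.

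The main obstacle is the normalization step: making precise that after using up the continuous conjugation freedom to straighten the axis of $\alpha$, the residual normalizer symmetry is enough to force $\beta$ into the palindromic shape $\begin{pmatrix}a&b\\b&a\end{pmatrix}$ \emph{and} simultaneously keep $\alpha$ a homothety. One must argue that the elliptic cone point $z_0$, being canonically attached to $\Gamma$, is pinned to a specific location (on the imaginary axis, by symmetry of the configuration), and that demanding $z_0$ lie there removes exactly the rotational ambiguity, leaving only the palindromic $\beta$. The verification that the vertex angle sum is $2\pi/e$ rather than some submultiple — i.e. that the four corners genuinely form one cycle — is a short but essential computation using $\det\beta=1$ and the action of $\beta$ on the corner points; I would check it explicitly for one corner and note the others follow by the $z\mapsto-1/z$ symmetry. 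Everything else (that opposite sides are geodesics, that the region is nonempty and convex, that the pairings are isometries) is routine given the isometry-circle formalism recalled just before the statement.
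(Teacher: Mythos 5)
The paper does not actually prove this proposition: it is imported verbatim from Sijsling's thesis \cite{sij}, so there is no internal proof to compare yours against. Judged on its own terms, the second half of your plan (sides of $\mathcal{F}$ paired by $\alpha^{\pm1},\beta^{\pm1}$, a single vertex cycle of total angle $2\pi/e$, Poincar\'e's polygon theorem recovering the presentation $\langle\alpha,\beta:(\alpha\beta\alpha^{-1}\beta^{-1})^e=1\rangle$) is a sound and standard way to establish the fundamental-domain claim \emph{once the normal form is in hand}. The problem is the normal form itself, which you correctly identify as ``the main obstacle'' but do not actually overcome.

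Concretely: after conjugating $\alpha$ to $\mathrm{diag}(\lambda,\lambda^{-1})$, the residual freedom is conjugation by diagonal matrices, which sends $\beta=\left(\begin{smallmatrix}a&b\\c&d\end{smallmatrix}\right)$ to $\left(\begin{smallmatrix}a&\mu^2 b\\\mu^{-2}c&d\end{smallmatrix}\right)$, and by $z\mapsto-1/z$, which sends it to $\left(\begin{smallmatrix}d&-c\\-b&a\end{smallmatrix}\right)$. Neither operation can produce $a=d$; it can only normalize $b=\pm c$. The condition $a=d$ is equivalent to $\mathrm{tr}(\alpha\beta)=\mathrm{tr}(\alpha\beta^{-1})$, i.e.\ to $\mathrm{tr}(\alpha\beta)=\tfrac12\mathrm{tr}(\alpha)\,\mathrm{tr}(\beta)$, a conjugation-invariant constraint on the trace triple (geometrically: the axes of $\alpha$ and $\beta$ meet \emph{orthogonally}, not merely at a point). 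For a generic pair of hyperbolic generators with elliptic commutator this fails, and it is not restored by your two proposed mechanisms: the fixed point of the commutator is not preserved by the elliptic involution $\sigma$ (which conjugates $[\alpha,\beta]$ to a conjugate of its inverse, moving the fixed point), and placing the intersection point of the two axes at $i$ still leaves the axis of $\beta$ free to be any geodesic through $i$. Your supporting linear-algebra claim is also incorrect: a matrix commuting with $\left(\begin{smallmatrix}0&-1\\1&0\end{smallmatrix}\right)$ up to sign has the form $\left(\begin{smallmatrix}a&b\\-b&a\end{smallmatrix}\right)$ or $\left(\begin{smallmatrix}a&b\\b&-a\end{smallmatrix}\right)$, not $\left(\begin{smallmatrix}a&b\\b&a\end{smallmatrix}\right)$; what $J$-conjugation inverting $\beta$ gives is only $b=c$. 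The true content of Sijsling's statement is that for the $73$ arithmetic Takeuchi groups one can \emph{choose} a generating pair (up to Nielsen moves) whose trace triple satisfies this extra relation --- e.g.\ for $e2d1D6ii$ one has $(\sqrt6,\,2\sqrt2,\,2\sqrt3)$ with $2\sqrt3=\tfrac12\sqrt6\cdot2\sqrt2$ --- and this uses the arithmetic classification, which your argument never invokes. Without that input the proof does not close.
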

Figure \ref{dominio fundamental} shows a fundamental domain for the group of signature $(1;2)$ labeled $e2d1D6ii$ in \cite{sij}. In this case,
\[
\alpha=\begin{pmatrix}\frac{\sqrt{6}}{2}+\frac{\sqrt{2}}{2} & 0\\0 & \frac{\sqrt{6}}{2}-\frac{\sqrt{2}}{2}\end{pmatrix}\text{ and } \beta=\begin{pmatrix}\sqrt{2} & 1\\1 & \sqrt{2}\end{pmatrix}.
\]

Let $\Gamma$ be an arithmetic Fuchsian group of signature $(1;e)$ generated by $\alpha,\beta$. Our aim is to develop an algorithm that given $z\in\mathcal{H}$ returns $w\in\mathcal{F}$ and $\gamma\in\Gamma$ such that $\gamma(z)=w$. This problem is equivalent to the following: given $g\in\Gamma$, to express $g$ as a product of powers of $\alpha$ and $\beta$. The idea is to multiply $g$ on the left by a suitable sequence of matrices $\{g_{k_j}\}$, with $g_{k_j}$ a power of $\alpha$ or $\beta$ to obtain a product $g_{k_1}\cdots g_{k_n}g$, such that $g_{k_1}\cdots g_{k_n}g(\tau)$ belongs to the interior of $\mathcal{F}$, being $\tau$ as above. In this case, $g=\left(g_{k_1}g_{k_2}\cdots g_{k_n}\right)^{-1}$. Observe that the decomposition of $g$ as a product of generators is not unique in general.


\begin{figure}
\centering
\scalebox{0.35}{\includegraphics{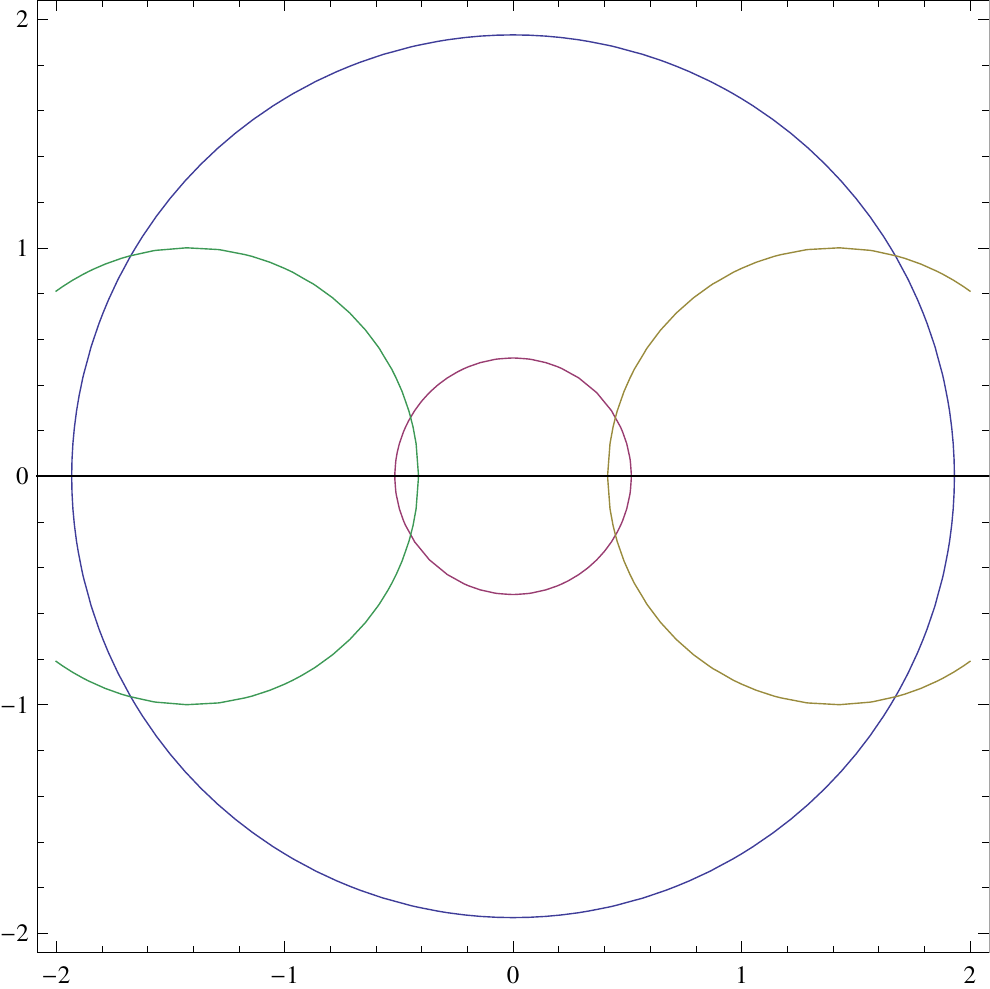}}\hspace{2cm}
\caption{Fundamental domain for $\Gamma=e2d1D6ii$.}
\label{dominio fundamental}
\end{figure}
\vskip 2mm
Let $z\in\mathcal{H}$ and $g\in\Gamma$ be such that $g(z)\not\in S(\lambda)$. Define $N(g)\in\mathbb{Z}$ such that $\lambda^{-1}\leq|\alpha^{N(g)}(\tau)|\leq\lambda$. We propose the following procedure, whose theoretical justification has been given in \cite{blancoboix}.


\begin{algorithm}\caption{Decomposition into distinguished closed paths}\label{hola1}
\begin{algorithmic}
\label{algoritmo1}
\begin{footnotesize}
\REQUIRE $g\in\Gamma,\tau\in\mathrm{Int}\left(\mathcal{F}\right)$.
\ENSURE $\{n_{\alpha},n_{\beta}\}$ such that $\{\tau,g(\tau)\}=n_{\alpha}\{\tau,\alpha(\tau)\}+n_{\beta}\{\tau,\beta(\tau)\}$.
\medskip
\STATE $\gamma\gets g,n_{\alpha}\gets 0,n_{\beta}\gets 0$;
\STATE $flag=false$.
\WHILE {$flag==false$}
\IF{$\gamma(\tau)\not\in S$}
\STATE $n_{\alpha}\gets n_{\alpha}+N(g)$.
\STATE $g\gets\alpha^{N(g)}g$.
\STATE $\gamma\gets\gamma\alpha^{-N(g)}$.
\ELSE
\IF{$\gamma(\tau)\in\mathcal{F}$}
\STATE $flag\gets true$.
\ENDIF
\ENDIF
\IF{$\gamma(\tau)\in S^{+}$}
\STATE $n_{\beta}\gets n_{\beta}+1$.
\STATE $g\gets\beta^{-1}g$.
\STATE $\gamma\gets\gamma\beta$.
\ENDIF
\IF{$\gamma(\tau)\in S^{-}$}
\STATE $n_{\beta}\gets n_{\beta}-1$.
\STATE $g\gets\beta g$.
\STATE $\gamma\gets\gamma\beta^{-1}$.
\ENDIF
\ENDWHILE
\RETURN $\{n_{\alpha},n_{\beta}\}$ such that $\{\tau,g(\tau)\}=n_{\alpha}\{\tau,\alpha(\tau)\}+n_{\beta}\{\tau,\beta(\tau)\}$.
\end{footnotesize}
\end{algorithmic}
\end{algorithm}



\subsection{Complexity} As promised, the properly discontinuous character of the action of the group $\Gamma$ of signature $(1;e)$ implies fast decodability. Let $\{(x_k,y_k,z_k,t_k)\}_{k=1}^{N}$ be the set of integral $4$-tuples to be encoded, and $\{\gamma_{k}\}_{k=1}^{N}$ the corresponding set of matrices. Since we have chosen $\tau$ in the interior of $\mathcal{F}$, $|C|=N$.
\begin{proposition}The total complexity of the point reduction algorithm (counting points, comparisons and additions) is at most $10\log_2\left(|C|+1\right)$.
\label{complexity}
\end{proposition}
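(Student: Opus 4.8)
The plan is to factor the running time of Algorithm~\ref{algoritmo1} as (number of passes through the \emph{while} loop)~$\times$~(cost of a single pass), to bound the second factor by an absolute constant and the first by the word length of the input matrix, and finally to bound that word length by $\log_2(|C|+1)$ using that balls in $\Gamma$ grow exponentially.

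For the cost of one pass I would simply inspect Algorithm~\ref{algoritmo1}. A single pass evaluates $\gamma(\tau)$ once (one M\"{o}bius transformation), reuses that value for the membership tests ``$\gamma(\tau)\in S(\lambda)$'', ``$\gamma(\tau)\in\mathcal F$'', ``$\gamma(\tau)\in S^{+}$'' and ``$\gamma(\tau)\in S^{-}$'', and then updates exactly one of the counters $n_\alpha,n_\beta$ together with the matching left multiplication. By Proposition~\ref{generatorsTak} we have $\mathcal F=S(\lambda)\cap\mathrm{Ext}(I(\beta))\cap\mathrm{Ext}(I(\beta^{-1}))$, so after the strip test each remaining test is a bounded number of comparisons against the fixed quantities $\lambda^{\pm1}$ and $I(\beta^{\pm1})$, and $N(g)$ is computed once. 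Counting points, comparisons and additions as the statement prescribes, and using that the branches of the body are mutually exclusive, a line-by-line tally gives a per-pass cost of at most $10$, independent of $g$ and of $|C|$.

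For the number of passes I would argue that each pass removes one leading syllable from the pending word in $\alpha,\beta$: a power of the homothety $\alpha$ to put $\gamma(\tau)$ back into the strip $S(\lambda)$, or a single $\beta^{\mp1}$ to move it from $S^{\pm}$ towards $\mathcal F$. Following the justification recorded in~\cite{blancoboix}, which uses the \emph{properly discontinuous} action of $\Gamma$ on $\mathcal H$, the imaginary part of $\gamma(\tau)$ strictly decreases along the iteration and the loop stops exactly when $\gamma(\tau)\in\mathrm{Int}(\mathcal F)$, i.e.\ when the accumulated word equals $g^{-1}$; hence the number of passes does not exceed the syllable length $\ell(g)$ of $g$. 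Now $\Gamma$ is generated by $\alpha,\beta$ with the single relation $(\alpha\beta\alpha^{-1}\beta^{-1})^{e}=\pm1$, so the number of its elements of length $\le k$ in $\{\alpha^{\pm1},\beta^{\pm1}\}$ grows like $3^{k}$; therefore, when generating the constellation we may take the $|C|$ code matrices among words of length $\le k$ with $k=\lceil\log_3(|C|+1)\rceil\le\log_2(|C|+1)$, recalling that $|C|=N$ because $\tau$ lies in the interior of $\mathcal F$. Multiplying the two bounds, the total number of points, comparisons and additions is at most $10\,\ell(g)\le 10\,k\le 10\log_2(|C|+1)$.

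The step I expect to cost the most effort is making the count of a single pass genuinely land at $\le 10$ while also proving that \emph{every} pass removes at least one syllable, so that the number of passes is bounded by the word length and not merely finite. This is precisely where the properly discontinuous action and the explicit hyperbolic geometry of $S(\lambda)$, $I(\beta)$ and $I(\beta^{-1})$ enter; one also has to check that the slightly ambiguous recomputation of $N(g)$ inside the loop body does not break the accounting. The exponential growth estimate for $\Gamma$, by contrast, is robust — it leaves the comfortable gap between $\log_3$ and $\log_2$ — so the real burden is the explicit bookkeeping of Algorithm~\ref{algoritmo1} rather than any new idea.
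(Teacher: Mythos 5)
Your proposal is correct and follows essentially the same route as the paper's proof: a per-step operation count of $10$ (comparisons plus one $2\times 2$ matrix product), the observation that the number of loop passes equals the generator word length of $g$, and an exponential ball-growth argument to bound that length by $\log_2(|C|+1)$ when the constellation is built from the shortest words. The only cosmetic difference is that you count words of length $k$ as growing like $3^k$ and then absorb the slack via $\log_3\le\log_2$, whereas the paper bounds the count by $2^k$ directly; you also make explicit the (correct) implicit assumption that the codebook is chosen among shortest words, which the paper leaves as ``the worst case.''
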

\begin{proof}In the proof of the correctness of our algorithm, if a matrix is expressible as a product of $m$ generator matrices, the algorithm does exactly $m$ steps, each step consisting of at most $2$ comparison and $1$ matrix product (which accounts for at most $4$ products and $4$ sums of real numbers). Hence, $10m$ operations if the matrix is a product of $m$ generators.

On the other side, we can write $|C|\leq \sum_{k=0}^M2^k=2^{M+1}-1$ for certain $M\geq 1$. Indeed, we can decompose $|C|$ as a sum in which the $k$-th is at most the number of matrices which are a product of $k$ matrices. Hence, the worst case happens when the algorithm needs to decompose a matrix which is product of $M=\log_2(|C|+1)$ generators. It performs $10M$ operations.
\end{proof}

\section{The Gaussian channel}

Now, we suppose that our channel is affected by an additive white Gaussian noise, which we model as a sequence of independent identically distributed random variables $CN(0,\Sigma)$ with $||\Sigma||=N_0$. In this case, we have $N$ 4-tuples encoded by matrices $\{\gamma_1,...,\gamma_N\}\in\Gamma$. Fix a fundamental domain $\mathcal{F}$ for $\Gamma$ and let $\tau\in\mathcal{F}$ be an interior point with maximal distance to the boundary of $\mathcal{F}$. As in the previous section, if we want to transmit the $4$-tuple $(x_k,y_k,z_k,t_k)$, we actually send the corresponding matrix acting on $\tau$, \emph{i.e.}, $\gamma_k(\tau)$. But the receiver will have $\gamma_k(\tau)+n_k$, where $n_k$ is a realization of $CN(0,\Sigma)$ at the time in which the receiver receives the information. The fact that we have as many matrices as $4$-tuples is due to the fact that the point $\tau\in\mathcal{H}$ on which our matrices act is chosen to be interior to the prescribed fundamental domain $\mathcal{F}$.

\begin{lemma}Let $\Gamma$ be a Fuchsian group. If $\gamma_1(\mathcal{F})=\gamma_2(\mathcal{F})$ for some $\gamma_1,\gamma_2\in\Gamma$, then $\gamma_1=\pm\gamma_2$.
\end{lemma}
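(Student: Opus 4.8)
The plan is to reduce the claim to a standard rigidity property of fundamental domains. Set $\gamma:=\gamma_2^{-1}\gamma_1\in\Gamma$; the hypothesis $\gamma_1(\mathcal{F})=\gamma_2(\mathcal{F})$ is then equivalent to $\gamma(\mathcal{F})=\mathcal{F}$, and the conclusion $\gamma_1=\pm\gamma_2$ is equivalent to $\gamma=\pm\mathrm{Id}$ (the kernel of the action of $\mathrm{SL}(2,\mathbb{R})$ on $\mathcal{H}$ is $\{\pm\mathrm{Id}\}$). So it suffices to prove: if $\gamma\in\Gamma$ satisfies $\gamma(\mathcal{F})=\mathcal{F}$, then $\gamma$ acts as the identity on $\mathcal{H}$.

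First I would use that $\mathcal{F}$ has nonempty interior — clear for the finite-sided hyperbolic polygons considered here, cf.\ \cite{alsinabayer} and Proposition \ref{generatorsTak} — and argue by contradiction. Suppose $\gamma$ does not act as the identity on $\mathcal{H}$. Then the M\"obius transformation induced by $\gamma$ is not the identity map, so it fixes at most two points of the Riemann sphere, and in particular its fixed-point set in $\mathcal{H}$ has empty interior. Hence I can choose $\tau\in\mathrm{Int}(\mathcal{F})$ with $\gamma(\tau)\neq\tau$ (the point $\tau$ used throughout the paper works, after a harmless perturbation inside $\mathrm{Int}(\mathcal{F})$ if needed).

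The decisive step is then immediate: from $\gamma(\mathcal{F})=\mathcal{F}$ we get $\gamma(\tau)\in\mathcal{F}$, so $\tau$ and $w:=\gamma(\tau)$ are two points of $\mathcal{F}$ with $w=\gamma(\tau)$ and $\gamma\in\Gamma$; property (a) in the definition of a fundamental domain forces $\tau=w$, contradicting $\gamma(\tau)\neq\tau$. Therefore $\gamma$ acts trivially, $\gamma=\pm\mathrm{Id}$, and $\gamma_1=\pm\gamma_2$. I do not anticipate a genuine obstacle: this is the usual uniqueness-of-fundamental-domain rigidity, and the only mildly delicate points are the existence of an interior point of $\mathcal{F}$ (clear for the polygonal domains in play) and the bookkeeping between $\mathrm{GL}(2,\mathbb{R})$ and $\mathrm{PSL}(2,\mathbb{R})$, which is exactly what produces the sign $\pm$ in the statement. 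Alternatively, one could shortcut the contradiction by invoking directly the stronger clause of property (a), which already yields $\gamma=\mathrm{Id}$ from $\tau,\gamma(\tau)\in\mathcal{F}$; I prefer the argument above since it uses only the unambiguous clause $z=w$.
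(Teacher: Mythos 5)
Your proposal is correct and follows essentially the same route as the paper: reduce to $\gamma=\gamma_2^{-1}\gamma_1$, apply it to an interior point $\tau$ of $\mathcal{F}$, and invoke property (a) of the fundamental domain (interior points are not $\Gamma$-congruent to other points of $\mathcal{F}$) to conclude $\gamma(\tau)=\tau$ and hence $\gamma=\pm\mathrm{Id}$. The extra care you take (perturbing $\tau$ off the fixed-point set, tracking the $\mathrm{SL}$ versus $\mathrm{PSL}$ sign) only makes explicit what the paper's shorter proof leaves implicit.
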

\begin{proof}Take an interior point $\tau\in\mathcal{F}$. Then, $\gamma_1(\tau)=\gamma_2(w)$, where $w\in\mathcal{F}$ is another interior point. Hence, $w=\gamma_2^{-1}\gamma_1(\tau)$. Since interior points cannot be congruent each other modulo $\Gamma$, it follows that $w=\tau$ and $\gamma_1=\pm\gamma_2$.
\end{proof}

Hence, suppose that $n_k$ is small enough to ensure that both $\gamma_k(\tau),\gamma_k(\tau)+n_k\in\gamma_k(\mathcal{F})$. In this case, according with the previous lemma, the reduction algorithm will return the same matrix (up to a sign).

\subsection{An alternative Fuchsian group} Here, we explore the Fuchsian code attached to $\Gamma(6,1)$ (see notations in 2.2.1).

\begin{theorem}[Alsina, Bayer,\, cf. \cite{alsinabayer}]Given $\alpha=a+b\sqrt{3}\in\mathbb{Z}[\sqrt{3}]$ denote $\alpha'=a-b\sqrt{3}$. Then, the group $\Gamma(6,1)$ is
$$
\left\{\gamma=\dfrac{1}{2}\left(\begin{array}{cc}\alpha & \beta\\-\beta' & \alpha'\end{array}\right)\mid\alpha,\beta\in\mathbb{Z}[\sqrt{3}],\mathrm{det}(\gamma)=1,\alpha\equiv\beta\equiv\alpha\sqrt{3}\pmod{2}\right\}.
$$
\end{theorem}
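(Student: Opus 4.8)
The plan is to describe the maximal order of $H$ explicitly, compute the units of reduced norm $1$, and then push everything through the embedding $\phi$. First I would fix the quaternion algebra: since $D_H = 6 = 2\cdot 3$ is the product of two primes, $H$ is the (unique up to isomorphism) small-ramified indefinite quaternion $\mathbb{Q}$-algebra of discriminant $6$, and one checks that $H = \left(\frac{3,-1}{\mathbb{Q}}\right)$ has the right ramification set (ramified exactly at $2$ and $3$, not at $\infty$). So throughout I take $a = 3$, $b = -1$, which makes $\phi$ land in $\mathrm{M}_2(\mathbb{Q}(\sqrt 3))$ with the entry pattern $x + y\sqrt3$, $z + t\sqrt3$, $-(z - t\sqrt3)$, $x - y\sqrt3$ — already the shape appearing in the statement.

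Next I would produce an explicit maximal order $\mathcal{O} \subset H$. Since all maximal orders in an indefinite algebra are conjugate, it is enough to exhibit one; the standard reference \cite{alsinabayer} gives, for $D_H = 6$, a $\mathbb{Z}$-basis of a maximal order (typically something like $\mathcal{O} = \mathbb{Z}\langle 1, I, \tfrac{1+I+J}{2}, \tfrac{?}{2}\rangle$, with the precise half-integral generators chosen so that the lattice is closed under multiplication and has reduced discriminant $6$). I would then verify the two order axioms — closure under multiplication and that reduced traces and norms of basis elements are integral — and confirm the discriminant is $6$, so $\mathcal{O}$ is maximal. The key point is to translate membership in $\mathcal{O}$ into congruence conditions on the rational coordinates: writing $\omega = x + yI + zJ + tK$ with $x,y,z,t \in \tfrac12\mathbb{Z}$, membership in $\mathcal{O}$ is equivalent to a congruence like $2x, 2y, 2z, 2t \in \mathbb{Z}$ together with the coupling $2x \equiv 2y \equiv 2z \equiv 2t \pmod 2$ (the precise congruences are what get repackaged below).

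Then I would apply $\phi$. Under $\phi$, the element $\tfrac12(x + yI + zJ + tK)$ maps to $\tfrac12\begin{pmatrix} x + y\sqrt3 & z + t\sqrt3 \\ -(z - t\sqrt3) & x - y\sqrt3\end{pmatrix}$; setting $\alpha = x + y\sqrt3$ and $\beta = z + t\sqrt3$ (so $\alpha' = x - y\sqrt3$, $\beta' = z - t\sqrt3$) turns this into exactly $\tfrac12\begin{pmatrix}\alpha & \beta \\ -\beta' & \alpha'\end{pmatrix}$. The condition of reduced norm $1$ becomes $\det(\gamma) = 1$ via $\mathrm{N}(\omega) = \det(\phi(\omega))$, and the half-integral-coordinate / coupling congruences defining $\mathcal{O}$ become precisely $\alpha, \beta \in \mathbb{Z}[\sqrt3]$ with $\alpha \equiv \beta \equiv \alpha\sqrt3 \pmod 2$ — here I would be careful that the congruence $\alpha \equiv \alpha\sqrt 3 \pmod 2$ correctly encodes the pairing between the $x$ and $y$ coordinates (i.e. $2x \equiv 2y$), and that $\alpha\equiv\beta$ encodes the pairing between $(x,y)$ and $(z,t)$. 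Finally I would note that passing from $\mathcal{O}^1$ to $\Gamma(6,1) = \phi(\mathcal{O}^1)$ is harmless since $\phi$ is a monomorphism, so the image is exactly the displayed set.

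The main obstacle is bookkeeping rather than depth: getting the half-integral maximal order right for $D=6$ and checking that the integrality-plus-coupling congruences on $(x,y,z,t)$ translate cleanly into the single stated congruence $\alpha \equiv \beta \equiv \alpha\sqrt3 \pmod 2$ in $\mathbb{Z}[\sqrt3]$. Everything else — $\mathrm{N} = \det$, $\mathrm{Tr} = \mathrm{Tr}$, and the faithfulness of $\phi$ — is already recorded in the excerpt, and the conjugacy of maximal orders lets me avoid worrying about which representative I picked. I would lean on \cite{alsinabayer} for the explicit order and its fundamental domain, and restrict my own work to verifying the coordinate-to-matrix dictionary.
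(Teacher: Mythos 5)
The paper gives no proof of this theorem --- it is imported verbatim from Alsina--Bayer \cite{alsinabayer} --- and your outline (exhibit the explicit maximal order of $\left(\frac{3,-1}{\mathbb{Q}}\right)$, check maximality via the reduced discriminant, and translate the half-integrality condition $2x\equiv 2y\equiv 2z\equiv 2t\pmod 2$ into the stated congruence $\alpha\equiv\beta\equiv\alpha\sqrt{3}\pmod 2$ under $\phi$) is exactly the standard derivation carried out in that reference. The plan is sound as stated, with the only remaining work being the bookkeeping you already flag: the order in question is $\mathbb{Z}\bigl[1,I,J,\tfrac{1+I+J+K}{2}\bigr]$, and one checks directly that $\alpha\equiv\alpha\sqrt{3}\pmod 2$ encodes $x\equiv y$ while $\alpha\equiv\beta$ encodes $x\equiv z$, $y\equiv t$.
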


This group contains $\Gamma^2$, the group generated by the squares of the elements of the Takeuchi group $\Gamma$ labeled $e2d1D6ii$ in \cite{sij}. Furthermore, $\Gamma(6,1)$ is the image by $\phi$ of the multiplicative group of elements of reduced norm $1$ in a maximal order (unique up to conjugation) of $\left(\frac{3,-1}{\mathbb{Q}}\right)$. Notice that we are interested in a particuar subgroup of $\Gamma(6,1)$, namely, the subgroup of matrices whose entries belong to $\mathbb{Z}[\sqrt{3}]$, since our codewords consists of $4$-tuples of integers. This is the image by the left-regular representation of the group of unit of reduced norm $1$ in the natural order of the quaternion $\mathbb{Q}$. Hence, if $(x,y,z,t)$ is a codeword encoded by an element of $\Gamma(6,1)$, it satisfies $x^2-3y^2+z^2-3t^2=1$.

\begin{lemma}The sum of the elements of the first (or second) column of any element of $\Gamma(6,1)$ is non zero.
\end{lemma}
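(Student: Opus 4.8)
The plan is to show that if $\gamma = \frac{1}{2}\begin{pmatrix}\alpha & \beta\\-\beta' & \alpha'\end{pmatrix} \in \Gamma(6,1)$, then $\alpha - \beta' \neq 0$ (the first-column sum) and $\beta + \alpha' \neq 0$ (the second-column sum). First I would argue by contradiction: suppose $\alpha - \beta' = 0$, i.e. $\beta' = \alpha$, equivalently $\beta = \alpha'$. Writing $\alpha = a + b\sqrt{3}$, this forces $\beta = a - b\sqrt{3}$, so $\alpha\beta = a^2 - 3b^2 \in \mathbb{Z}$ and in fact $\alpha\beta = \mathrm{N}_{\mathbb{Q}(\sqrt3)/\mathbb{Q}}(\alpha)$. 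Meanwhile $-\beta'\alpha' = -\alpha \alpha' = -\mathrm{N}(\alpha)$ as well, so let me instead directly use the determinant condition $\det(\gamma) = \frac{1}{4}(\alpha\alpha' + \beta\beta') = 1$, which under $\beta = \alpha'$ becomes $\frac{1}{4}(\alpha\alpha' + \alpha'\alpha) = \frac{1}{2}\alpha\alpha' = 1$, hence $\alpha\alpha' = 2$, i.e. $a^2 - 3b^2 = 2$.

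The key step is then to show $a^2 - 3b^2 = 2$ has no solution in $\mathbb{Z}$ (nor, more carefully, in $\frac{1}{2}\mathbb{Z}$ if one must track the congruence conditions). Reducing modulo $3$ gives $a^2 \equiv 2 \pmod 3$, but squares mod $3$ are only $0$ or $1$, a contradiction. One must also handle the congruence constraints $\alpha \equiv \beta \equiv \alpha\sqrt{3} \pmod 2$ from the theorem: under $\beta = \alpha'$ these say $a + b\sqrt3 \equiv a - b\sqrt3 \pmod 2$, forcing $2b\sqrt3 \equiv 0$, which is automatic, and $\alpha \equiv \alpha\sqrt{3}$, i.e. $a + b\sqrt3 \equiv b\cdot 3 + a\sqrt 3 \pmod 2$, giving $a \equiv 3b \equiv b$ and $b \equiv a \pmod 2$ — consistent, no new obstruction. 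So the genuine obstruction is purely the norm equation $a^2 - 3b^2 = 2$ mod $3$. For the second column one runs the symmetric computation: $\beta + \alpha' = 0$ means $\beta = -\alpha'$, i.e. $\beta = -(a - b\sqrt3)$, and $\det(\gamma) = \frac14(\alpha\alpha' + \beta\beta') = \frac14(\alpha\alpha' + \alpha'\alpha) = \frac12\alpha\alpha' = 1$ again, landing on the same impossible equation $a^2 - 3b^2 = 2$.

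I expect the only real subtlety — and thus the main thing to be careful about rather than a true obstacle — is bookkeeping the factor $\frac12$ and the congruence conditions: one should verify that $\alpha, \beta \in \mathbb{Z}[\sqrt3]$ (as given) so that $a, b \in \mathbb{Z}$ genuinely, and not worry that half-integer entries sneak in. Once $a,b \in \mathbb{Z}$ is in hand, the mod-$3$ argument closes both cases immediately. I would write it up as: assume a column sum vanishes, derive $\alpha\alpha' = 2$ from $\det(\gamma)=1$, expand to $a^2 - 3b^2 = 2$, reduce mod $3$, and conclude $2$ is not a quadratic residue mod $3$ — contradiction. Hence both column sums are nonzero, which is exactly what is needed so that the denominators $cz+d$ appearing in the point-reduction algorithm (and the associated isometry circles) are well-defined and the Möbius action is non-degenerate on the relevant matrices.
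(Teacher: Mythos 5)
Your proposal is correct and follows essentially the same route as the paper: assume a column sum vanishes, use $\det(\gamma)=\tfrac14(\alpha\alpha'+\beta\beta')=1$ to derive $\alpha\alpha'=\beta\beta'=2$, and observe that $a^2-3b^2=2$ is impossible because $2$ is not a square modulo $3$. You merely spell out the details the paper leaves implicit (the determinant computation, the second-column case, and the integrality of $a,b$), which is fine.
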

\begin{proof}If this were not the case, let $\gamma=\frac{1}{2}\left(\begin{array}{cc}\alpha & \beta\\-\beta' & \alpha'\end{array}\right)\in\Gamma(6,1)$ with $\alpha-\beta'=0$. We would have that $\beta\beta'=2$, in particular we would have that $2$ is a square modulo $3$.
\label{sumafilas}
\end{proof}

\begin{remark}
We point out that the reduction algorithm can recover the transmitted matrix up to a sign; how to determine the sign will be discussed in the further research section.
\end{remark}





\begin{theorem}[\cite{bayerremon}]\label{redpointalg}
There exists a reduction point algorithm for each cocompact Fuchsian group. Fixed $C$ a codebook and $n = |C|$, the complexity is asymptotically  logarithmic in $n$.
\end{theorem}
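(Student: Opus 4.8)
The plan is to lift, to an arbitrary cocompact Fuchsian group $\Gamma$, the mechanism behind Algorithm \ref{algoritmo1} together with the counting argument of Proposition \ref{complexity}, replacing the ad hoc ``fundamental strip plus isometry circles'' picture (available only in the signature $(1;e)$ case, where one generator is a homothety) by a Dirichlet polygon, and replacing the strip reduction by a monotone descent towards a fixed basepoint. First I would fix a point $p\in\mathcal{H}$ with trivial $\Gamma$-stabiliser and take the Dirichlet domain $\mathcal{F}_p=\{z\in\mathcal{H}\mid d(z,p)\le d(z,\gamma(p))\text{ for all }\gamma\in\Gamma\}$. Since $\Gamma$ is cocompact it contains no parabolic elements and $\mathcal{F}_p$ is a \emph{compact} hyperbolic polygon with finitely many sides; its side-pairing transformations form a finite symmetric generating set $S=\{s_1,\dots,s_r\}$, and by Poincar\'e's polygon theorem the relations among them yield a finite presentation. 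This is what substitutes for the special presentation $\langle\alpha,\beta\mid(\alpha\beta\alpha^{-1}\beta^{-1})^e=\pm1\rangle$.

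Next I would set up the algorithm. Given $g\in\Gamma$, encode it by $z_0=g(\tau)$ with $\tau\in\mathrm{Int}(\mathcal{F}_p)$, so that $g$ is pinned down up to sign by $z_0$ (interior points have trivial stabiliser, exactly as in the lemma preceding the Remark after Lemma \ref{sumafilas}). At step $k$ one tests whether $z_k\in\mathcal{F}_p$; if so, stop; if not, $z_k$ violates one of the finitely many defining inequalities, and the geometric claim is that then some side-pairing $s\in S$ satisfies $d(s(z_k),p)<d(z_k,p)$; one sets $z_{k+1}=s(z_k)$, records $s$, and iterates. Termination comes from proper discontinuity: the orbit $\Gamma(p)$ is discrete, so $\{d(w,\gamma(p)):\gamma\in\Gamma\}$ is a discrete subset of $\mathbb{R}_{\ge 0}$ for each $w$; the numbers $d(z_k,p)$ strictly decrease and lie in this discrete set, hence after finitely many steps $z_m\in\mathcal{F}_p$, and with the recorded elements $g=(s_{i_1}\cdots s_{i_m})^{-1}$, which recovers the transmitted $4$-tuple up to sign (the sign being fixed as indicated in the Remark after Lemma \ref{sumafilas}). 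This is the required reduction point algorithm, valid for every cocompact $\Gamma$.

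For the complexity I would argue as in Proposition \ref{complexity}. The number of iterations equals the length $m$ of the word output, and each iteration costs $O(1)$: membership in $\mathcal{F}_p$ is decided by evaluating a fixed number of hyperbolic distances (one per side of $\mathcal{F}_p$), and the update is a single $2\times 2$ matrix multiplication. To bound $m$ against $n=|C|$: the number of elements of $\Gamma$ of $S$-word length $\le m$ is at most $1+r\sum_{k=0}^{m-1}(r-1)^k=O\big((r-1)^m\big)$, while, since a cocompact Fuchsian group is non-elementary (it contains a non-abelian free subgroup), its growth is also bounded below by $\mu^m$ for some $\mu>1$. Taking the codebook $C$ to consist of matrices of word length $\le M$ with $M$ least so that $n$ of them are available forces $M=\Theta(\log n)$; hence every codeword is decoded in at most $M$ steps and the whole decoding runs in $O(\log n)$ time, i.e.\ asymptotically logarithmic in $n$.

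I expect the main obstacle to be the geometric lemma in the second paragraph: showing that whenever $z\notin\mathcal{F}_p$ there is a \emph{side-pairing} generator that strictly decreases $d(\cdot,p)$, so that the descent never stalls before reaching $\mathcal{F}_p$ and, combined with discreteness of $\Gamma(p)$, cannot run forever. For the signature $(1;e)$ groups this was handled by the homothety $\alpha$ shrinking $\mathrm{Im}$ inside the strip $S(\lambda)$ together with the isometry circles of $\beta^{\pm 1}$; in general one must replace that by a neighbouring-cell analysis in the Dirichlet tessellation, which is the substantive content of \cite{bayerremon}. Everything else — compactness and finiteness of $\mathcal{F}_p$, the $O(1)$ cost per step, and the growth estimate $M=\Theta(\log n)$ — is routine.
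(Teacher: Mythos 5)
Your outline is sound and reaches the stated conclusion, but it runs along a genuinely different geometric track from the paper's. The paper (following \cite{bayerremon}) works with a Ford-type fundamental domain cut out by isometry circles: the domain is the common exterior of the $2r$ isometry circles of the generators and their inverses, the test ``which circle contains $\gamma(\tau)$'' decides which generator to apply, and the monotone quantity driving termination is the \emph{increase} of $\mathrm{Im}(z)$ under $g$ when $z\in\mathrm{Int}(I(g))$ (since $\mathrm{Im}(g(z))=\mathrm{Im}(z)/|cz+d|^2$ with $|cz+d|<1$ there), with proper discontinuity guaranteeing the orbit attains a maximal imaginary part. You replace this by a Dirichlet polygon centred at $p$ and a monotone \emph{decrease} of $d(\cdot,p)$ under side-pairings; this is a legitimate and arguably cleaner substitute, and it makes explicit two things the paper leaves implicit, namely why cocompactness gives a compact finite-sided polygon and why the codebook must be taken as (essentially) a word-metric ball for the $\Theta(\log n)$ bound to make sense. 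The complexity argument itself is the same in both treatments: exponential growth of the number of group elements of word length $\le m$ (the paper's ``asymptotically $e^n$ codes which reduce with $n$ steps''), so decoding depth is $O(\log|C|)$ at $O(1)$ cost per step. The one point where your route carries a real proof obligation that the paper's does not is the descent lemma you flag yourself: in the Ford picture, $z\notin\mathcal{F}$ immediately exhibits a generator whose isometry circle strictly contains $z$ and hence strictly increases $\mathrm{Im}$, whereas in the Dirichlet picture the existence of a \emph{single side-pairing} $s$ with $d(s(z),p)<d(z,p)$ is not a definition-chase; it is essentially the content of the proof that side-pairings generate, and one must rule out stalling (a point outside $\mathcal{F}_p$ at which every side-pairing weakly increases the distance). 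Since you identify this as the substantive content to be imported from \cite{bayerremon} --- exactly the role the isometry-circle criterion plays in the paper's own sketch --- your proposal is at the same level of rigour as the original, just built on the dual tessellation.
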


\begin{proof}
The proof uses isometric circles in order to determine which map has to be applied in each step of the reduction algorithm.

The fundamental domains in these particular cases are given by a number of isometric circles equal twice the number of generators of $\Gamma$. Therefore, in each step of the reduction algorithm, we perform at most twice the number of generators comparisons and $2\times 2$ matrix products.

The complexity of the algorithm is due to the fact that once we fixed a code we will know in advance the number of the steps the algorithm is going to use. With a fixed number of steps,  say $n$, we have, asymptotically $e^n$ codes which reduce with $n$ steps.
\end{proof}


\begin{remark}
Recall that linearly structured codes are bound to have a worst-case maximum-likelihood (ML) decoding complexity proportional to $|S|^{\kappa}=|C|$, where $|S|$ is the size of the  underlying (\emph{e.g.}, QAM) symbol alphabet, $\kappa$ is the number of independent (QAM) symbols in one codeword, and $|C|$ is the size of the resulting code. The complexity is measured in the number of metric evaluations $||y-hx||_F$ one has to do in order to decode. Notice that each metric evaluation involves two multiplications of real numbers and one addition. So when measuring the complexity in number of operations, the complexity becomes $3|C|$ in the SISO case, and $3n_tT|C|$ in the $n_t\times n_r$ MIMO case, where $n_t,n_r$ and $T$ denote the number of transmit and receive antennas and decoding delay (=matrix block length), respectively.

Usually linear codes are preferred due to simple decoding methods such as a sphere decoder, but here we have seen that by using nonlinear Fuchsian codes one can actually reduce the decoding complexity from $O(|C|)$ to $O(\log |C|).$

The constant terms are nevertheless relevant when the code size is small; it can be shown in an analogous way that in the proof of Proposition \ref{complexity}, that the Fuchsian code $\Gamma(6,1)$ has complexity $19\log_3(2|C|+1)$ which is less than $3|C|$ for $|C|\geq 30$.
\end{remark}

We have to be careful with the choice of $\tau$. If $\gamma(\tau)$ is close to the boundary of $\gamma(\mathcal{F})$, it will be \emph{less immune} to noise than other point further to the boundary: noise in the direction of minimal distance with this minimal distance as magnitude could eventually bring the point outside the fundamental domain, while this is less likely to happen to a point of bigger distance to the border. We have adopted the following optimality criterion for choosing $\tau$: denoting by $\partial\gamma_k(\mathcal{F})$ the boundary of $\gamma_k(\mathcal{F})$ and by $\mathrm{Int}\left(\gamma_k\left(\mathcal{F}\right)\right)$ its interior, let $c_k\in\gamma_k(\mathcal{F})$ be such that $d(c_k,\partial\gamma_k(\mathcal{F}))=\displaystyle\max_{z\in \gamma_k(\mathcal{F})} d(z,\partial\gamma_k(\mathcal{F}))$. Then, we have to find $z\in\mathrm{Int}(\mathcal{F})$ minimizing $\sum_{k}|\gamma_k(z)-c_k|^2$, \emph{i.e.}, the point $\tau$ such that its transforms are simultaneously closest to the corresponding centers on quadratic average. The choice of such an optimal $\tau$ is critical to the performance of the code, as also confirmed by our earlier simulations. Notice that different quaternion algebras can also yield drastically different performances.

\subsection{Generating the constellation}

We address now the problem of how to produce the $4$-tuples $(x,y,z,t)\in\mathbb{Z}^4$ such that $x^2-ay^2-bz^2+abt^2=1$, which will be sent in the form of matrices of $\Gamma(D,1)$ acting on $\tau$ by M\"{o}bius transforms. We will restrict ourselves to $a=3$ and $b=-1$, to derive explicit results, but the same analysis works in general.

As we have said before, only three symbols in each $4$-tuple are independent, hence, we would like to parametrize the set of these $4$-tuples by an infinite set of $3$-tuples $(m,k_1,k_2)\in\mathbb{Z}$. Since the quaternion algebra $\left(\frac{3,-1}{\mathbb{Q}}\right)$ is indefinite, one has that the normic equation $x^2-3y^2+z^2-3t^2=1$ has infinitely many integer solutions (cf.\cite{alsinabayer}). It is possible to parametrize all the rational solutions of this normic equation by means of rational functions in three variables, but using this method to produce integer solutions seems a difficult task. We develop instead, an alternative method to produce an infinite set of such solutions. Next, we detail our construction.

First, notice that the ring of integers of the number field $\mathbb{Q}(\sqrt{3})$ is $\mathbb{Z}[\sqrt{3}]$. The multiplicative group of units of this ring is $\left\{\pm\varepsilon^m:m\in\mathbb{Z}\right\}$, where $\varepsilon$ is a unit of infinite order (called a fundamental unit). This is a very particular version of Dirichlet's theorem on units, but in our case it is easy to find by inspection such a fundamental unit. We will consider $\varepsilon=2+\sqrt{3}$. Given an element $\theta=a+\sqrt{3}b\in \mathbb{Q}(\sqrt{3})$, let us denote by $\theta'$ its Galois conjugated, \emph{i.e.}, $a-\sqrt{3}b$.

Given $(m,k_1,k_2)$ a triple of non-negative integers ($m\neq 0$), define $a_m+\sqrt{3}b_m=\varepsilon^m$. We have that $a_m^2-3b_m^2=\varepsilon^m(\varepsilon')^m=1$. Now, set $x_{m,k_1}+\sqrt{3}y_{m,k_1}:=a_m\varepsilon^{k_1}$ and $z_{m,k_2}+\sqrt{3}t_{m,k_2}:=\sqrt{3}b_m\varepsilon^{k_2}$. Notice that $x_{m,k_1}^2-3y_{m,k_1}^2=a_m^2$ and $z_{m,k_2}^2+3t_{m,k_2}^2=-3b_m^2$, hence
$$x_{m,k_1}^2-3y_{m,k_1}^2+z_{m,k_2}^2-3t_{m,k_2}^2=a_m^2-3b_m^2=1.$$
We will denote by $\phi(m,k_1,k_2)$ the $4$-tuple $(x_{m,k_1},y_{m,k_1},z_{m,k_2},t_{m,k_2})$. This way, we have parametrized by three variables an infinite subset of integer points of the hyper quadric $x^2-3y^2+z^2-3t^2=1$.

\begin{proposition}The map $\phi$ is bijective over its image, which is contained in the set $\{(x,y,z,t)\in\mathbb{Z}_{\geq 0}^4:x^2-3y^2=m^2, z^2-3t^2=-3r^2,\mbox{ for some }m,r\in\mathbb{Z}\}$.
\end{proposition}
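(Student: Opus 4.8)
The plan is to establish the two assertions separately: that the image of $\phi$ is contained in the displayed set, and that $\phi$ is injective on the admissible triples $(m,k_1,k_2)$ with $m,k_1,k_2\geq0$ and $m\neq0$.

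For the containment, I would argue by taking norms. Writing $N$ for the norm of $\mathbb{Q}(\sqrt3)/\mathbb{Q}$, so that $N(u+v\sqrt3)=u^2-3v^2$ and $N(\varepsilon)=\varepsilon\varepsilon'=1$, multiplicativity gives $x_{m,k_1}^2-3y_{m,k_1}^2=N(a_m\varepsilon^{k_1})=N(a_m)N(\varepsilon)^{k_1}=a_m^2$ (since $a_m$ is a rational integer) and, likewise, $z_{m,k_2}^2-3t_{m,k_2}^2=N(b_m\sqrt3\,\varepsilon^{k_2})=-3b_m^2$; since $a_m,b_m\in\mathbb{Z}$, these are precisely the two conditions defining the displayed set, once non-negativity of the four coordinates is checked. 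For non-negativity I would note that the subset of $\mathbb{Z}[\sqrt3]$ consisting of elements with non-negative coordinates in the basis $\{1,\sqrt3\}$ is closed under multiplication (from $(u+v\sqrt3)(p+q\sqrt3)=(up+3vq)+(uq+vp)\sqrt3$) and contains $\varepsilon=2+\sqrt3$, hence contains $\varepsilon^k$ for all $k\geq0$; in particular $a_m+b_m\sqrt3=\varepsilon^m$ has $a_m,b_m\geq0$, and then so do $a_m\varepsilon^{k_1}$ and $(b_m\sqrt3)\varepsilon^{k_2}$. This puts the image in $\mathbb{Z}_{\geq0}^4$.

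For injectivity, suppose $\phi(m,k_1,k_2)=\phi(m',k_1',k_2')$. I would read off from the first two coordinates the identity $a_m\varepsilon^{k_1}=a_{m'}\varepsilon^{k_1'}$ in $\mathbb{Z}[\sqrt3]$, and from the last two (after cancelling $\sqrt3$) the identity $b_m\varepsilon^{k_2}=b_{m'}\varepsilon^{k_2'}$. Taking, without loss of generality, $k_1\leq k_1'$ in the first, one has $a_m=a_{m'}\varepsilon^{k_1'-k_1}$ with rational left-hand side; since $\varepsilon$ is irrational and, by the recursion $\varepsilon^{d+1}=(2u+3v)+(u+2v)\sqrt3$ for $\varepsilon^d=u+v\sqrt3$, every positive power $\varepsilon^d$ has $\sqrt3$-coordinate $\geq1$, and since $a_{m'}\neq0$, this forces $k_1'=k_1$ and hence $a_m=a_{m'}$. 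The identical argument applied to $b_m\varepsilon^{k_2}=b_{m'}\varepsilon^{k_2'}$, using $b_m,b_{m'}\neq0$ (true because $m,m'\neq0$), yields $k_2=k_2'$ and $b_m=b_{m'}$. Finally $\varepsilon^m=a_m+b_m\sqrt3=a_{m'}+b_{m'}\sqrt3=\varepsilon^{m'}$, and $\varepsilon$ has infinite order, so $m=m'$; thus $\phi$ is a bijection onto its image.

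I do not expect a genuine obstacle here — the argument is essentially bookkeeping. The two points requiring care are that non-negativity survives all the multiplications (handled once by closure of the non-negative-coordinate subset of $\mathbb{Z}[\sqrt3]$ under multiplication) and that $\varepsilon^k\notin\mathbb{Q}$ for $k\neq0$ together with $\varepsilon$ having infinite order, both of which are immediate from $\varepsilon=2+\sqrt3$ being a quadratic irrationality of norm~$1$ (equivalently, $\varepsilon/\varepsilon'=\varepsilon^2>1$). The same point explains why the hypothesis $m\neq0$ is imposed: it guarantees $b_m\neq0$, without which the second identity would degenerate to $0=0$ and distinct values of $k_2$ would collapse, destroying injectivity.
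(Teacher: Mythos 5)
Your proof is correct, and for the injectivity half it travels essentially the same elementary road as the paper, though with a slightly different mechanism: the paper splits into cases on the triples and, when $m_1\neq m_2$, rules out $a_{m_1}\varepsilon^{k_{1,1}}=a_{m_2}\varepsilon^{k_{2,1}}$ by taking norms (which forces $a_{m_1}^2=a_{m_2}^2$ and hence $a_{m_1}=a_{m_2}$, a contradiction), whereas you pin down $k_1=k_1'$ first by observing that $a_{m'}\varepsilon^{k_1'-k_1}$ cannot be rational unless the exponent vanishes, and then cascade the equalities $a_m=a_{m'}$, $b_m=b_{m'}$, $m=m'$. Both arguments rest on the same two facts ($\varepsilon$ is a unit of infinite order; the norm of $\varepsilon$ is $1$), so this is a variation in bookkeeping rather than a different idea. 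The more substantive difference is coverage: the paper's proof addresses only injectivity, leaving the norm identities to the paragraph preceding the proposition and never checking that the image actually lies in $\mathbb{Z}_{\geq 0}^4$; your closure argument for the non-negative-coordinate sub-semiring of $\mathbb{Z}[\sqrt{3}]$ supplies that missing verification cleanly, and your observation about why $m\neq 0$ is needed ($b_m\neq 0$, without which the $(z,t)$-equation degenerates) is exactly the right reading of the hypothesis.
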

\begin{proof}Let $(m_1,k_{1,1},k_{1,2})$ and $(m_2,k_{2,1},k_{2,2})$ two triples of non-negative integers with $m_1,m_2\neq 0$. Suppose $m_1=m_2=m$. If $k_{1,1}\neq k_{2,1}$, then $a_m\varepsilon^{k_{1,1}}\neq a_m\varepsilon^{k_{2,1}}$ and $\phi(m_1,k_{1,1},k_{1,2})\neq \phi(m_2,k_{2,1},k_{2,2})$. The case  $k_{2,1}\neq k_{2,2}$ is analogous. Suppose that $m_1\neq m_2$. In this case, $a_{m_1}\neq a_{m_2}$ or $b_{m_1}\neq b_{m_2}$. Suppose that $a_{m_1}\neq a_{m_2}$. In this case, $a_{m_1}\varepsilon^{k_{1,1}}\neq a_{m_2}\varepsilon^{k_{2,1}}$, since otherwise, $a_{m_1}^2=a_{m_2}^2$, and since $\varepsilon>0$, we would have that $a_{m_1}=a_{m_2}$. The remaining case is identical.
\end{proof}

As an example, we have that $\phi(1,0,1)=(2,0,3,2)$, $\phi(2,0,1)=(7,0,12,8)$, or $\phi(2,1,1)=(14,7,12,8)$. Notice that the last two values are difficult to obtain merely by inspection.

The explicit parametrization of the whole group of units is a delicate problem. On the contrary to the number field setting, the structure of the group of units of reduced norm $1$ in quaternion algebras has not been explicitly described yet. However, there exist some interesting theoretical results, as \cite{capi}.

\subsection{An alternative approach}

In this subsection we fix $D=6$, $N=1$, and the quaternion algebra of discriminant $6$, $H=\left(\frac{3,-1}{\mathbb{Q}}\right)$; but analogous results  hold in general. From the theory of embeddings of quadratic fields into quaternion algebras (cf.\cite{alsinabayer}) it is known that there exist embeddings of $\mathbb{Q}(\sqrt{d})$ into $H$ for any square-free $d>0$ such that $\left(\frac{d}{2}\right)=\left(\frac{d}{3}\right)=-1$ (for example $d=3,-1,6$).

Fixing an embedding of $\mathbb{Q}(\sqrt{d})$ into $H$ is equivalent to fix  a pure quaternion $\omega=xI+yJ+zK\in H$ of norm $-d$, that is $(x,y,z)\in\mathbb{Z}^3$ such that $3x^2-y^2+3z^2=d$. Since $H$ is indefinite, this normic equation has infinitely many solutions, hence, there exist bijections $\varphi_d:\mathbb{N}\to \{xI+yJ+zK\in\mathbb{Z}^3:3x^2-y^2+3z^2=d\}$. Determining such a bijection is equivalent to solve the diophantine equation $3x^2-y^2+3z^2=d$, which is a classical problem in Number Theory. It is possible to give asymptotic estimates of the number of solutions, which involves the use of modular forms of fractional weight $3/2$ (cf. \cite{duke}). Nevertheless, there exists a polynomial algorithm which computes finite sets of solutions (cf. \cite{simon}).

Now, given a real quadratic field $\mathbb{Q}(\sqrt{d})$ embedded in $H$ we can obtain units in $H$ from the group of units of the quadratic field, generated by $\varepsilon=a+b\sqrt{d}$  (notice that the fundamental unit $\varepsilon$ is usually normalized so that $a,b>0$ and
its absolute value is greater than $1$). Thus, identifying the units in the quaternion order with the corresponding matrices in the arithmetic Fuchsian group $\Gamma(6,1)$, we define maps $\psi_d:\mathbb{N}^2\to \Gamma(6,1)$ given by $\psi_d(t,m)=\left(a+b\varphi_d(t)\right)^m$.

\begin{proposition}The map $\psi_d$ is injective when restricted to $\mathbb{N}\times\left(\mathbb{N}\setminus\{0\}\right)$.
\end{proposition}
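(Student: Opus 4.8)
The plan is to translate the claim into a statement about powers of a single unit in an embedded real quadratic field, and then to exploit the fact that two distinct quadratic subfields of the quaternion division algebra $H=\left(\frac{3,-1}{\mathbb{Q}}\right)$ (a division algebra since $D_H=6>1$) intersect only in $\mathbb{Q}$. First I would set up notation: for a triple $t$ put $\omega_t:=\varphi_d(t)\in H$, a non-zero pure quaternion of reduced norm $-d$, so that $\omega_t^2=d$; writing $\varepsilon=a+b\sqrt d$ for the fundamental unit of $\mathbb{Q}(\sqrt d)$, normalised with $a,b>0$ and $\varepsilon>1$ (so $b\neq 0$), the subring $L_t:=\mathbb{Q}(\omega_t)=\mathbb{Q}(a+b\omega_t)\subseteq H$ is a field $\mathbb{Q}$-isomorphic to $\mathbb{Q}(\sqrt d)$, the isomorphism carrying $\varepsilon$ to $u_t:=a+b\omega_t$. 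In particular $u_t$ has infinite multiplicative order in $H^\times$, since it corresponds to $\varepsilon$ and $|\varepsilon|>1$, and the reduced norm of $H$ restricted to $L_t$ is the field norm of $L_t/\mathbb{Q}$.

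Now suppose $\psi_d(t_1,m_1)=\psi_d(t_2,m_2)=:g$ with $m_1,m_2\geq 1$. I would argue in four steps. (i) $g\notin\mathbb{Q}$: otherwise, viewing $g=u_{t_1}^{m_1}$ inside $L_{t_1}\cong\mathbb{Q}(\sqrt d)$, we would get $\varepsilon^{m_1}\in\mathbb{Q}$, hence $\varepsilon^{m_1}=(\varepsilon')^{m_1}$ and $\varepsilon^{2m_1}=\mathrm{N}(\varepsilon)^{m_1}=\pm 1$, impossible for $m_1\geq 1$ because $\varepsilon>1$. (ii) $L_{t_1}=L_{t_2}$: the element $g$ lies in $L_{t_1}\cap L_{t_2}$, which is a subfield of the quadratic field $L_{t_1}$ containing $\mathbb{Q}$, hence equals $\mathbb{Q}$ or $L_{t_1}$; by (i) it equals $L_{t_1}$, and symmetrically $L_{t_1}=L_{t_2}=:L$. (iii) In $L$ the polynomial $X^2-d$ has exactly the two roots $\pm\omega_{t_1}$, and $\omega_{t_2}\in L$ satisfies $\omega_{t_2}^2=d$, so $\omega_{t_2}=\pm\omega_{t_1}$; in the case $\omega_{t_2}=-\omega_{t_1}$ one has $u_{t_2}=a-b\omega_{t_1}=\overline{u_{t_1}}$, so $u_{t_1}u_{t_2}=\mathrm{N}(u_{t_1})=\mathrm{N}(\varepsilon)=\pm 1$ and hence $u_{t_2}=\pm u_{t_1}^{-1}$, whence $g=u_{t_1}^{m_1}=u_{t_2}^{m_2}$ gives $u_{t_1}^{m_1+m_2}=\pm 1$ with $m_1+m_2\geq 2$, contradicting that $u_{t_1}$ has infinite order. (iv) Therefore $\omega_{t_1}=\omega_{t_2}$, i.e. $\varphi_d(t_1)=\varphi_d(t_2)$, so $t_1=t_2$ since $\varphi_d$ is injective; then $u_{t_1}=u_{t_2}$ and $u_{t_1}^{m_1}=u_{t_1}^{m_2}$ forces $m_1=m_2$, again by infinite order.

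The hard part will be step (ii), the identification of the two a priori unrelated embedded fields: everything hinges on their sharing the non-central element $g$, and this is precisely where the hypothesis $m_i\geq 1$ enters (through step (i)) and where one genuinely uses that $H$ is a division algebra, so that $\mathbb{Q}(g)$ — and hence $L_{t_1}\cap L_{t_2}$ — is a field rather than a more general commutative ring. The remaining ingredients (that $u_t$ has infinite order, that a quadratic polynomial has at most two roots in a field of characteristic $0$, and the injectivity of $\varphi_d$) are routine. I would also flag the harmless normalisation $\mathrm{N}(\varepsilon)=1$ implicit in the assertion $\psi_d(\mathbb{N}^2)\subseteq\Gamma(6,1)$ — which holds for $d=3$ and $d=6$ — noting that the argument above never uses more than $\mathrm{N}(\varepsilon)=\pm 1$.
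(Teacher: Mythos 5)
Your proof is correct, but it follows a genuinely different route from the paper's. The paper argues computationally: it splits into the cases $m_1=m_2$ (where $\psi_d(t,m)=l+r\varphi_d(t)$ with $r\neq 0$, so the pure part pins down $\varphi_d(t)$ and hence $t$) and $m_1\neq m_2$ (where it extracts the scalar parts $l_i=\sum_{j\ \mathrm{even}}\binom{m_i}{j}b^jd^{j/2}a^{m_i-j}$ via the binomial formula and shows $l_1>l_2$ when $m_1>m_2$, using the positivity normalisation $a,b>0$ and $d>0$). Your argument is structural instead: you identify $\mathbb{Q}(\omega_t)$ with $\mathbb{Q}(\sqrt d)$, observe that a common value $g\notin\mathbb{Q}$ forces the two embedded quadratic fields to coincide, reduce to $\omega_{t_2}=\pm\omega_{t_1}$ by counting roots of $X^2-d$ in a field, and dispatch both the conjugate case and the exponent comparison by the fact that $\varepsilon$ is not a root of unity. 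What your version buys is independence from the sign/monotonicity normalisation (you only use $\mathrm{N}(\varepsilon)=\pm1$ and $\varepsilon>1$, not the term-by-term growth of the scalar part, which in the paper is slightly delicate at the $j=0$ term when $a=1$), and it makes transparent exactly where $m\geq 1$ is needed; the paper's version is more elementary and self-contained. One small over-attribution on your side: the division-algebra property of $H$ is not really what makes $L_{t_1}\cap L_{t_2}$ a field --- $\mathbb{Q}[\omega_t]\cong\mathbb{Q}[X]/(X^2-d)$ is a field simply because $d$ is square-free and greater than $1$, so $X^2-d$ is irreducible over $\mathbb{Q}$, and any subring of a quadratic field containing $\mathbb{Q}$ and a non-rational element is the whole field. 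This does not affect the validity of your argument.
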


\begin{proof}
Consider $\psi_d(t_1,m_1)=\psi_d(t_2,m_2)$. Suppose first that $m_1=m_2=m$. Then, since we have $\varepsilon^m=l+r\sqrt{d}$, with $r\neq 0$, from $
l+r\varphi_d(t_1)=l+r\varphi_d(t_2)
$, we deduce  $\varphi_d(t_1)=\varphi_d(t_2)$, hence $t_1=t_2$.

Suppose now that $m_1\neq m_2$. In this case, setting $\psi_d(t_1,n_1)=l_1+m_1\varphi_d(t_1)$ and $\psi_d(t_2,m_2)=l_2+r_2\varphi_d(t_2)$, since $r_i\varphi_d(t_i)$ is a pure quaternion, we have that $l_1=l_2$.
However, by the binomial formula, and taking into account that $\varphi_d(t_1)^2=\varphi_d(t_2)^2=d$, we have
$$
l_1=\sum_{\tiny\begin{array}{l}j=0 \\ j \text{ even} \end{array}}^{m_1}{{m_1}\choose{j}}b^{j}d^{\frac{j}{2}}a^{m_1-j}.$$
Now assume  $m_1>m_2\geq j$, so we have that ${{m_1}\choose{j}}>{{m_2}\choose{j}}$, hence
$$ l_1 > \sum_{\tiny\begin{array}{l}j=0 \\ j \text{ even} \end{array}}^{m_2}{{m_1}\choose{j}}b^{j}d^{\frac{j}{2}}a^{m_1-j}
>\sum_{\tiny\begin{array}{l}j=0 \\ j \text{ even} \end{array}}^{m_2}{{m_2}\choose{j}}b^{j}d^{\frac{j}{2}}a^{m_2-j}=l_2,
$$
which is a contradiction.
\end{proof}

With these maps we can produce a countable family of non-overlapping infinite families of codewords:
\begin{proposition}Let $d_1,d_2$ be two different square-free positive integers such that
$\mathbb{Q}(\sqrt{d_1}),\mathbb{Q}(\sqrt{d_2})\hookrightarrow H$. Then
$\psi_{d_1}(t_1,m_1)=\psi_{d_2}(t_2,m_2)$ if and only if $m_1=m_2=0$.
\end{proposition}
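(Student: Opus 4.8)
The claim is that for two distinct square-free positive integers $d_1, d_2$ with $\mathbb{Q}(\sqrt{d_1}),\mathbb{Q}(\sqrt{d_2})\hookrightarrow H$, one has $\psi_{d_1}(t_1,m_1)=\psi_{d_2}(t_2,m_2)$ exactly when $m_1=m_2=0$. The ``if'' direction is trivial, since $\psi_d(t,0)=(a+b\varphi_d(t))^0 = 1$ (the identity) independently of $t$ and $d$. So the content is the ``only if'' direction, and the plan is to argue by examining the pure-quaternion part.

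First I would write, as in the proof of the previous proposition, $\varepsilon_1^{m_1}=l_1+r_1\sqrt{d_1}$ and $\varepsilon_2^{m_2}=l_2+r_2\sqrt{d_2}$ where $\varepsilon_i=a_i+b_i\sqrt{d_i}$ is the fundamental unit of $\mathbb{Q}(\sqrt{d_i})$, so that $\psi_{d_i}(t_i,m_i)=l_i+r_i\varphi_{d_i}(t_i)$ in $H$, with $r_i\varphi_{d_i}(t_i)$ a pure quaternion. Equating $\psi_{d_1}(t_1,m_1)=\psi_{d_2}(t_2,m_2)$ and splitting into the scalar part and the pure-quaternion part of $H$ yields $l_1=l_2$ and $r_1\varphi_{d_1}(t_1)=r_2\varphi_{d_2}(t_2)$. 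Now suppose, for contradiction, that not both $m_i$ are zero; by symmetry assume $m_1\neq 0$, hence $r_1\neq 0$. Then the pure quaternion $r_1\varphi_{d_1}(t_1)$ is nonzero, so $r_2\neq 0$ and $m_2\neq 0$ as well. Squaring the pure-quaternion identity (a pure quaternion $\omega$ satisfies $\omega^2 = -\mathrm{N}(\omega)$, a scalar) gives $r_1^2\,\mathrm{N}(\varphi_{d_1}(t_1)) = r_2^2\,\mathrm{N}(\varphi_{d_2}(t_2))$, i.e. $r_1^2 d_1 = r_2^2 d_2$ by the defining norm condition on $\varphi_{d_i}$. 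Since $d_1,d_2$ are distinct and square-free, $r_1^2 d_1 = r_2^2 d_2$ forces $d_1 = d_2$ (comparing square-free kernels), a contradiction. Hence $m_1=0$, and symmetrically $m_2=0$.

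The one point requiring a little care — and the place I expect to be the main obstacle — is justifying that the identity really does decompose cleanly into a scalar plus a pure-quaternion part, and in particular that $r_i \varphi_{d_i}(t_i)$ lies in the pure-quaternion subspace $\mathbb{Q} I\oplus\mathbb{Q} J\oplus\mathbb{Q} K$ of $H$: this is exactly the statement that $\varphi_{d_i}(t_i)$ was constructed as a pure quaternion of norm $d_i$, which is how $\psi_{d_i}$ was defined in the previous subsection, so it is available to us. One should also note the degenerate subtlety: if $m_i \neq 0$ then $r_i \neq 0$ because $\varepsilon_i$ has infinite order and $\varepsilon_i^{m_i}\notin\mathbb{Q}$ for $m_i\neq 0$ (a power of the fundamental unit lies in $\mathbb{Q}$ only for exponent $0$). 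With these observations in place the squaring trick reduces everything to the elementary number-theoretic fact that $r_1^2 d_1 = r_2^2 d_2$ with $d_1,d_2$ square-free implies $d_1=d_2$, completing the proof.
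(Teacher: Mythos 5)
Your proposal is correct and follows essentially the same route as the paper: decompose into scalar and pure-quaternion parts, square the pure part to get $r_1^2d_1=r_2^2d_2$, and use square-freeness of the distinct $d_i$ to force $r_1=r_2=0$, hence $m_1=m_2=0$. Your version is in fact slightly more careful than the paper's, in that you explicitly justify why $r_i=0$ forces $m_i=0$ (a nonzero power of the fundamental unit is irrational).
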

\begin{proof}The \emph{if} clause is trivial.
Suppose $\psi_{d_1}(t_1,m_1)=\psi_{d_2}(t_2,m_2)$. Writing $\psi_{d_1}(t_1,m_1)=l_1+r_1\varphi_{d_1}(t_1)$ and $\psi_{d_2}(t_2,m_2)=l_2+r_2\varphi_{d_2}(t_2)$, we have that $l_1=l_2$ and $r_1\varphi_{d_1}(t_1)=r_2\varphi_{d_2}(t_2)$. Taking squares we obtain $r_1^2d_1=r_2^2d_2$, which implies $r_1=r_2=0$ and, since $d_1,d_2>0$, we deduce that $m_1=m_2=0$.
\end{proof}

The above facts, allow us to conclude the following

\begin{theorem}Let $\Gamma$ be the subgroup of $\Gamma(6,1)$ consisting of matrices with entries in $\mathbb{Z}[\sqrt{3}]$. There exists a parametrization of an infinite subset of $\Gamma$  by three degrees of freedom.
\end{theorem}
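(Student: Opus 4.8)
The plan is to read the statement off the maps $\psi_d$ of the previous subsection, letting the radicand $d$ supply the third degree of freedom. First I would fix a countable family $\{d_i\}_{i\in\mathbb{N}}$ of pairwise distinct square-free positive integers with $\left(\tfrac{d_i}{2}\right)=\left(\tfrac{d_i}{3}\right)=-1$ — so that each $\mathbb{Q}(\sqrt{d_i})$ embeds into $H=\left(\tfrac{3,-1}{\mathbb{Q}}\right)$ and $\psi_{d_i}\colon\mathbb{N}^2\to\Gamma(6,1)$ is defined — and, for convenience, also $d_i\equiv 3\pmod 4$; concretely one may take $d_i\equiv 11\pmod{24}$, a progression that contains infinitely many square-free integers. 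The extra congruence makes $\mathbb{Z}[\sqrt{d_i}]$ the full ring of integers of $\mathbb{Q}(\sqrt{d_i})$, so the fundamental unit $\varepsilon_{d_i}=a_i+b_i\sqrt{d_i}$ has $a_i,b_i\in\mathbb{Z}$, and $N_{\mathbb{Q}(\sqrt{d_i})/\mathbb{Q}}(\varepsilon_{d_i})=+1$ because $d_i$ has a prime factor $\equiv 3\pmod 4$.

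Second, I would check that each $\psi_{d_i}$ actually lands in the subgroup $\Gamma$, not merely in $\Gamma(6,1)$. Writing $\psi_{d_i}(t,m)=(a_i+b_i\varphi_{d_i}(t))^m$ with $\varphi_{d_i}(t)=xI+yJ+zK$ a pure quaternion having $x,y,z\in\mathbb{Z}$, the element $a_i+b_i\varphi_{d_i}(t)$ lies in the natural order $\mathbb{Z}+\mathbb{Z}I+\mathbb{Z}J+\mathbb{Z}K$, hence so do all of its powers; since such an element has reduced norm $1$, its image under the representation $\phi$ is a matrix with entries in $\mathbb{Z}[\sqrt 3]$ and determinant $1$, i.e. an element of $\Gamma$.

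Third, I would define $\Psi\colon\mathbb{N}\times\mathbb{N}\times(\mathbb{N}\setminus\{0\})\to\Gamma$ by $\Psi(i,t,m)=\psi_{d_i}(t,m)$ and prove injectivity: for fixed $i$ this is the injectivity of $\psi_{d_i}$ on $\mathbb{N}\times(\mathbb{N}\setminus\{0\})$ proved above, while for $i_1\neq i_2$ the equality $\psi_{d_{i_1}}(t_1,m_1)=\psi_{d_{i_2}}(t_2,m_2)$ would force $m_1=m_2=0$, which is excluded by hypothesis. Since the domain is infinite and in bijection with $\mathbb{N}^3$, $\Psi$ identifies an infinite subset of $\Gamma$ with a three-parameter set, which is the assertion. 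One may also argue more directly: the injective three-variable map constructed in the constellation-generation step, with image in the hyperquadric $x^2-3y^2+z^2-3t^2=1$, already parametrizes an infinite set of integral $4$-tuples, each of which is the data of a matrix of $\Gamma$ via $\phi$.

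The one step that is not pure bookkeeping, and therefore the main obstacle, is the first: exhibiting infinitely many admissible $d_i$ — equivalently, making precise that ``varying $d$'' is a genuine third parameter. This amounts to the soft number-theoretic statements that the local conditions at $2$ and $3$ translate into congruences on $d$, and that a coprime arithmetic progression contains infinitely many square-free integers; none of this is deep, but it is exactly what ``the above facts'' do not supply, so it is where the argument needs care. Everything downstream is the two injectivity propositions together with the observation that powers of an element of the natural order stay in the natural order.
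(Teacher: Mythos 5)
Your proposal is correct and takes essentially the same route as the paper: the paper's proof is exactly the map $\Psi(d,s,m)=\psi_d(s,m)$ on $A\times\mathbb{N}\times\left(\mathbb{N}\setminus\{0\}\right)$, with injectivity read off from the two preceding propositions. You additionally supply details the paper leaves implicit (infinitude of the admissible $d$, integrality and norm $+1$ of the fundamental unit, and the check that the image actually lands in the subgroup with entries in $\mathbb{Z}[\sqrt{3}]$), but the underlying argument is the same.
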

\begin{proof}Let $A$ be the infinite set of square-free integers $d$ such that $\mathbb{Q}(\sqrt{d})$ embeds into $\left(\frac{3,-1}{\mathbb{Q}}\right)$. For any $d>0$, fix a generator of the unit group of the form $a_d+b_d\sqrt{d}$ with $a,b>0$. Now, the map $\Psi:A\times\mathbb{N}\times\left(\mathbb{N}\setminus\{0\}\right)\to\Gamma(6,1)$ defined by $\Psi(d,s,m)=\psi_d(s,m)$ is injective. 
\end{proof}

\begin{remark}Notice that this theorem is not explicit, since it depends on how to produce the solutions of the normic form. But using the algorithm described in \cite{simon}, we can explicitly parametrize an infinite family of units by two dregrees of freedom. Further studies on the structure of the group of units will allow us to make the full parametrization more explicit.
\end{remark}

\subsection{Duplicating the size}

As a last step in our design, we can duplicate the size of the codebook in the following way: once we have made an optimal choice of matrices of $\Gamma(D,N)$ and $\tau\in\mathcal{H}$ having a codebook $C=\{\gamma_k(\tau)\}_{k=1}^{|C|}$, we can consider the new codebook $C=\{\pm\gamma_k(\tau)\}_{k=1}^{|C|}$. If a matrix $\gamma$ corresponds with the $4$-tuple $(x,y,z,t)$, and this $4$-tuple corresponds to the $3$-tuple $(m,k_1,k_2)$ of independent non-negative integers, we can impose that the matrix $-\gamma$ corresponds to the $3$-tuple $(-m,k_1,k_2)$. Notice that this is not ambiguous since the original triples are assumed to have non negative entries, and $\theta>0$.

To recover the right $3$-tuple from a received signal, first, we check if it belongs to $\mathcal{H}$ or to $-\mathcal{H}$. In the first case, we use the point reduction algorithm to obtain $(x,y,z,t)$ and the parametrization to obtain $(m,k_1,k_2)$. In the second case, we have received (unless the channel is in outage and an error is unavoidable) $v=-\gamma_k(\tau)+n$, hence, we apply the point reduction algorithm to $-v$, obtain $(x,y,z,t)$ and $(m,k_1,k_2)$, and we decode it as $(-m,k_1,k_2)$. 

    \begin{figure}
        \begin{center}
                \scalebox{1}{
                        \includegraphics[width=0.45\textwidth]{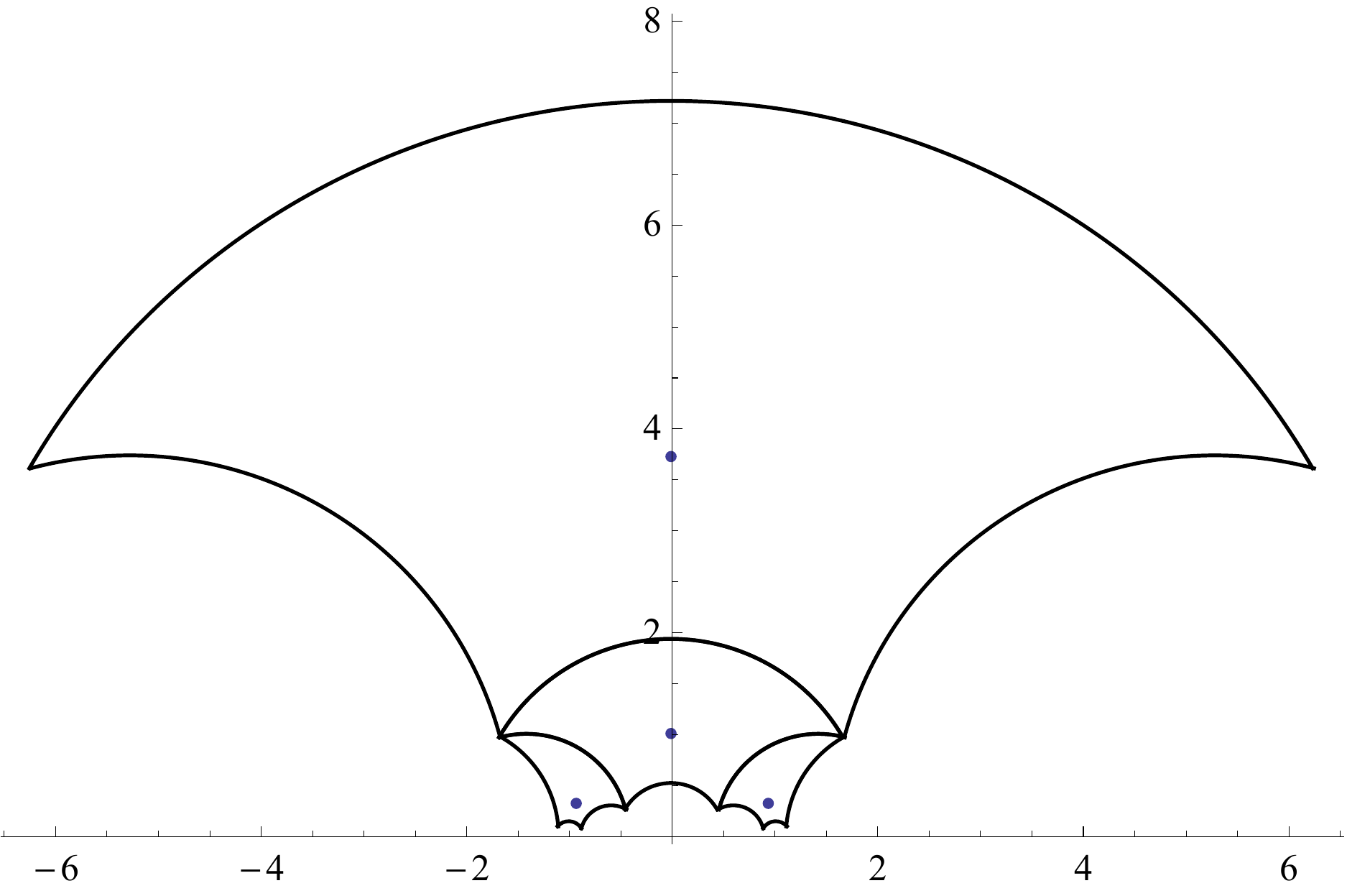}
                }
                        \caption[]{Example of constellation for $\Gamma=e2d1D6ii$}\label{constellation}
        \end{center}
    \end{figure}

\section{Simulation results}

    \begin{figure}
        \begin{center}
                \scalebox{1}{
                        \includegraphics[width=0.7\textwidth]{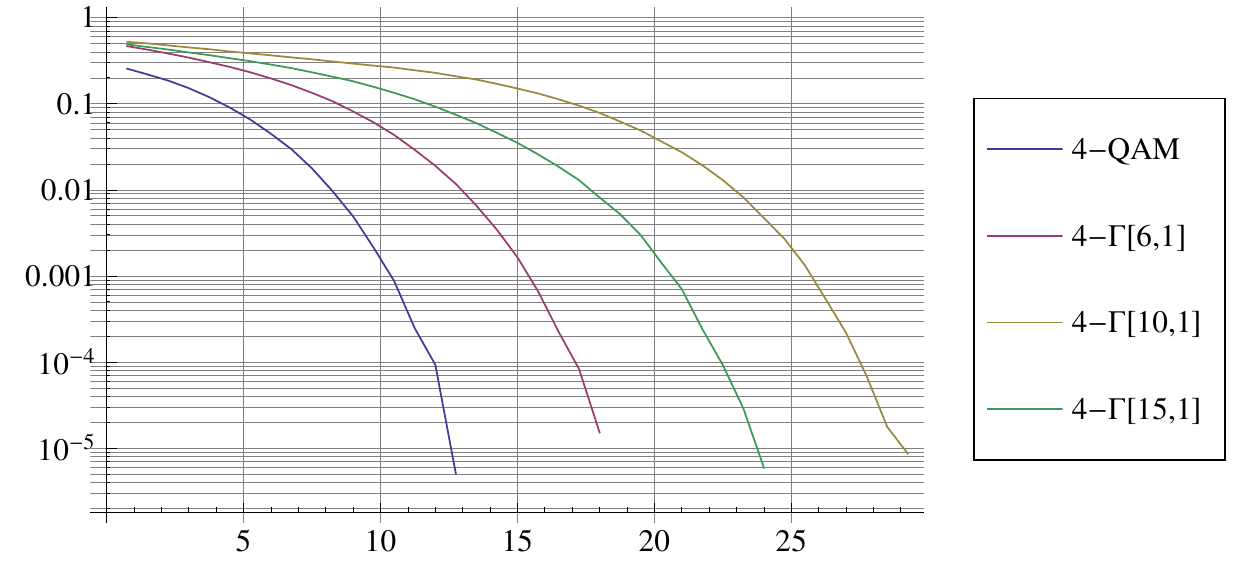}
                }
                        \caption[]{$4$ NUF codes vs $4$ QAM}\label{codes4}
        \end{center}
        \begin{center}
                \scalebox{1}{
                        \includegraphics[width=0.7\textwidth]{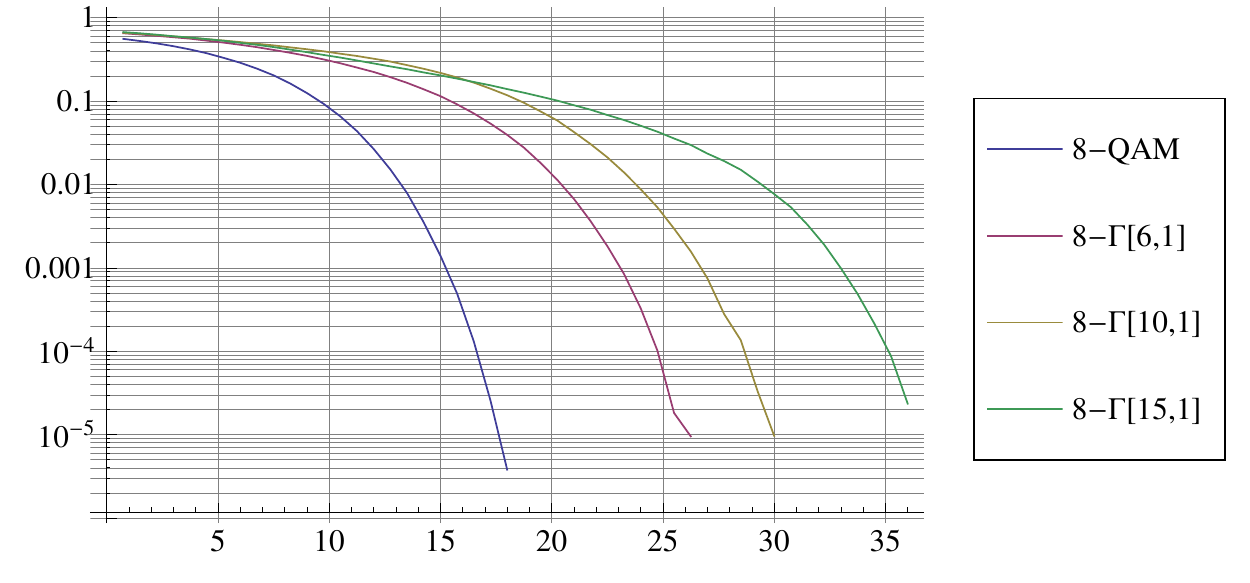}
                }
                        \caption[]{$8$ NUF codes vs $8$ QAM}\label{codes8}
        \end{center}
        \begin{center}
                \scalebox{1}{
                        \includegraphics[width=0.7\textwidth]{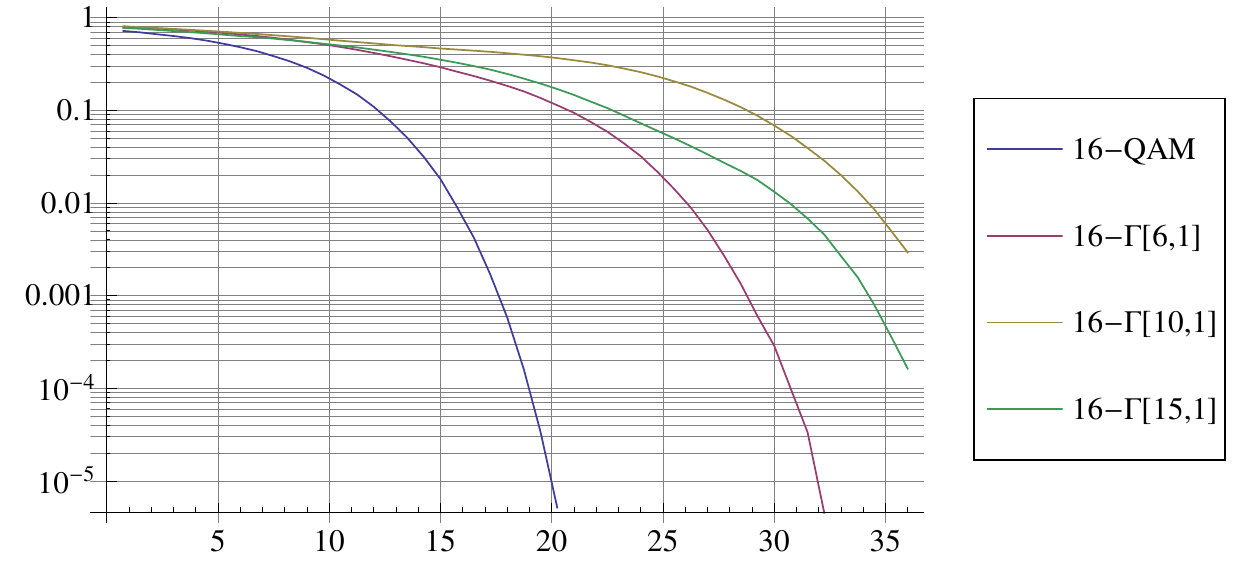}
                }
                        \caption[]{$16$ NUF codes vs $16$ QAM}\label{codes16}
        \end{center}
    \end{figure}

The simulations displayed in Figures \ref{codes4}, \ref{codes8} and \ref{codes16} have been done with $10^6$ rounds to compute the relative frequency of errors at each level of SNR. By $8$-QAM we mean a subset of $8$ symmetric symbols of a $16$-QAM constellation having minimal average energy. We can observe that the best of our $4$-NUF codes is still clearly outperformed by the $4$-QAM except for very low SNRs, but the gap to the worse 4-NUF is so vast that it gives hope to improve by another similar gap, which would bring us very close to 4-QAM. Considering the logarithmic decoding complexity\footnote{In general, there are also fast decoding algorithms for QAM constellations, but they come with a complexity--performance tradeoff, meaning that also the performance of a QAM constellation is degraded if we use suboptimal algorithms that are faster.}, some performance loss can easily be tolerated.  For bigger constellations the gap to QAM seems to grow, but luckily there are many more groups, fundamental domains, tessellations, generators, and centers $\tau$ to be tested that might have significantly better performance, as already observed in these preliminary simulations. In addition, our tentative results suggest that the point reduction algorithm can be improved at the penalty of increasing the complexity to $O(\log^2|C|)$.

\begin{remark} Non uniform constellations have been used in early-state signal transmission, the so-called  \emph{codec} transmission. Nowadays, the usage of certain non uniform constellations like ConQAM or NUQAM is being discussed and seriously considered for the design of MIMO systems for digital video (terrestrial) broadcasting (DVB-T2). While at this point we are only considering the AWGN channel, our aim is to generalize this framework for wireless fading channels and to design codes directly applicable to the DVB framework (for more information, see \cite{dvb}).
\end{remark}

\section{Further research}

As we have seen, the performance gap for different groups $\Gamma$, as $\Gamma=e2d1D6ii$, $\Gamma(6,1)$, $\Gamma(10,1)$ can be remarkable. In future work we will compare the performance and complexity for a bigger number of $\Gamma(D,1)$ and for different codebook sizes, as well as the explicit parametrization of all the $4$-tuples. A further study on our Fuchsian codes should also include the issue of error correction after point reduction, while not substantially increasing the complexity.


\subsection{Towards the fading channel}

For now, our scheme is valid only for AWGN channels, since in a fading channel a message $\gamma_k(\tau)$ would reach the receiving end as $h\gamma_k(\tau)+n$, with $h$ a random variable with circular Gaussian distribution $CN(0,1)$. One common simplification in the existing literature is to suppose that the receiver has perfect knowledge of the fading coefficient $h$ (via sending pilots in the signal, for instance).

It is well known that we can write $h=re^{i\theta}$ where $r$ is Rayleigh distributed and $\theta$ is uniformly distributed in $[0,2\pi]$. It is also common to suppose that the noise $n$ originates at the receiver side, hence, if we suppose as a first approach that $h=e^{i\theta}$, we can still use our scheme to decode: since $n$ is distributed as $CN(0,\Sigma)$, once the signal $v=e^{i\theta}\gamma_k(\tau)+n$ reaches the receiver, we perform $v^{*}=e^{-i\theta}y=\gamma_k(\tau)+e^{-i\theta}n$. Obviously, $e^{-i\theta}n$ has the same distribution as $n$, since $n$ is a complex circular symmetric random variable. We further have $|n|=|e^{-i\theta}n|$, \emph{i.e.}, the noise does not get amplified. Hence, it is enough to suppose that $h$ takes values in $\mathbb{R}$ and is Rayleigh distributed.




\end{document}